\def\BibTeX{{\rm B\kern-.05em{\sc i\kern-.025em b}\kern-.08em
    T\kern-.1667em\lower.7ex\hbox{E}\kern-.125emX}}
\title{Collaborative Decision-Making and the $k$-Strong Price of Anarchy in Common Interest Games}
\author{Bryce L. Ferguson, \IEEEmembership{Student Member, IEEE}, Dario Paccagnan, \IEEEmembership{Member, IEEE}, Bary S. R. Pradelski, and Jason R. Marden, \IEEEmembership{Senior Member, IEEE}
\thanks{This work was supported in part by the Office of Naval Research under Grant \# N00014-20-1-2359, the Air Force Office of Scientific Research under Grants \# FA95550-20-1-0054 and \# FA9550-21-1-0203, and the
French National Research Agency (ANR) under Grant \# ANR-19-CE48-0018-01.}
\thanks{Bryce L. Ferguson and Jason R. Marden are with the Department of Electrical and Computer Engineering at the University of California, Santa Barbara, CA. \{\texttt{blferguson,jrmarden}\}\texttt{@ece.ucsb.edu}.}
\thanks{Dario Paccagnan is with the Department of Computing at Imperial College London, UK. \texttt{d.paccagnan@imperial.ac.uk}.}
\thanks{Bary S. R. Pradelski is with the National Centre for Scientific Research (CNRS) and the Department of Economics at Oxford, UK.\texttt{bary.pradelski@cnrs.fr}.}}
\begin{document}

\maketitle

\begin{abstract}
	The control of large-scale, multi-agent systems often entails distributing decision-making across the system components.
	However, with advances in communication and computation technologies, we can consider new collaborative decision-making paradigms that bridge centralized and distributed control architectures.
	In this work, we seek to understand the benefits and costs of increased collaborative communication in multi-agent systems.
	We specifically study this in the context of common interest games in which groups of up to $k$ agents can coordinate their actions in maximizing a common objective function.
	The equilibria that emerge in these systems are the \kstrong Nash equilibria of the common interest game; studying the properties of these states provides relevant insights into the efficacy of inter-agent collaboration.
	Our contributions come threefold: 1) provide bounds on how well \kstrong Nash equilibria approximate the optimal system welfare, formalized by the \kstrong price of anarchy, 2) prove the run-time and transient performance of collaborative agent-based dynamics, and 3) introduce techniques of redesigning objectives for groups of agents which improve system performance.
	We study these three facets generally as well as in the context of resource allocation problems, in which we provide tractable linear programs that give tight bounds on the \kstrong price of anarchy.
\end{abstract}

\section{Introduction}

Large-scale systems such as transportation services \cite{Wollenstein2020}, robotic fleets \cite{khamis2015multi}, supply chains \cite{ranganathanReInventingFoodSupply2022}, or cloud computing services \cite{Tsakalozos2011} can be challenging to design effective control schemes for due to their many components and vast scale.
The two prevailing paradigms to design control schemes are centralized control~\cite{filip2008DecisionSupportControl,daini2022CentralizedTrafficControl,fang2015CentralizedResourceAllocation}, which guides behavior across the entire system and distributed control ~\cite{antonelli2013interconnected,mardenCooperativeControlPotential2009,murray2007recent}, which allows local components to guide their own behavior.
Each of these approaches possesses respective pros and cons: centralization allows for more direct manipulation of system behavior at the cost of greater communication and computation requirements, while decentralization reduces the communication and computation requirements but cannot always attain the desired system behavior.
Advancements in embedded communication and computation~\cite{das2019tarmac,Ferguson2022_Cost,siljak2005ControlLargescaleSystems,xu2022ResourceAwareDistributedSubmodular} enable the design of new paradigms that \textit{exist between centralized and distributed control.}

Specifically, we study the efficacy of 
learning in multi-agent systems when individual system components (or agents) can partially communicate and thus coordinate their behavior.
Many engineering domains are on the precipice of enabling these collaborative paradigms; for example, 
autonomous vehicle platoons with connected cruise control~\cite{Orosz2016},
unmanned aerial surveillance vehicles with range-limited communication \cite{nawaz2021UAVCommunicationNetworks}, and
cloud computing networks with emerging distributed learning techniques \cite{lazaridou2020emergent}.
In each of these settings, inter-agent communication and collaboration offer the opportunity to improve the performance attainable by the system as a whole; however, implementing these frameworks incurs costs that are both monetary--in the form of the additional technology required--and computational--in the form of more complex decision-making algorithms.
In this work, we provide tools to help better understand the \textit{benefits and costs associated with collaborative communication} in multi-agent systems.

\begin{figure*}[t]
    \centering
    \includegraphics[width=0.825\textwidth]{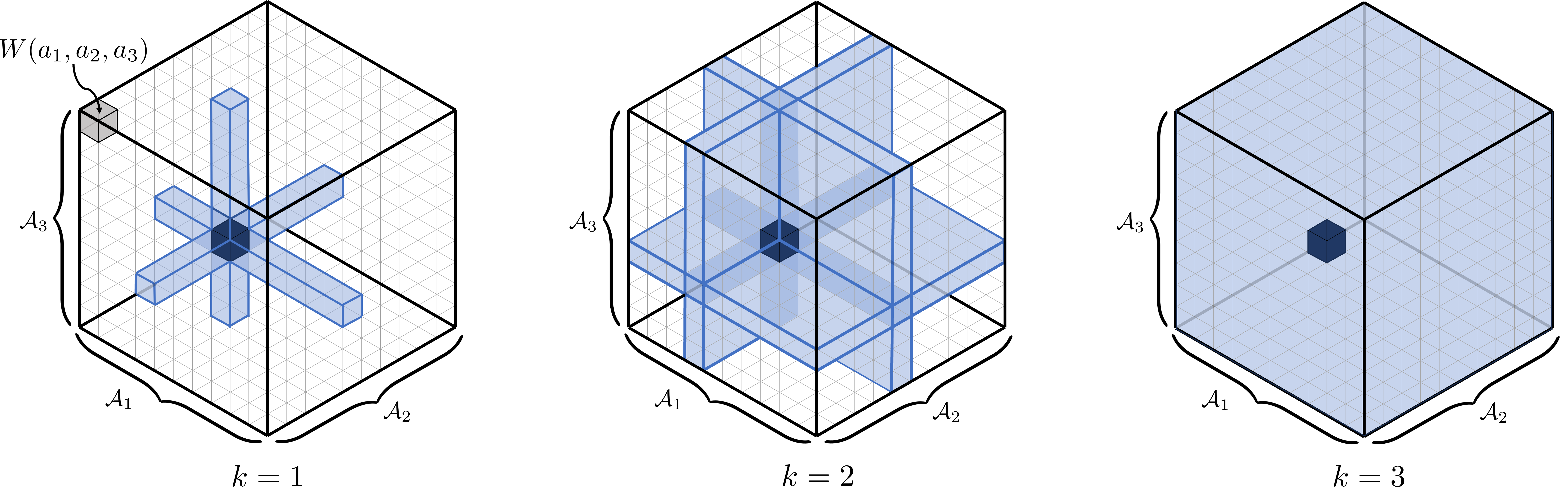}
    \caption{Illustration of the \kstrong Nash equilibrium local optimality guarantee for a three-agent common-interest game where $k \in \{1,2,3\}$.
    In each case, if the dark cube is a \kstrong Nash equilibrium, then it is optimal over the highlighted region with respect to the shared objective function $W$.
    As $k$ (the size of collaborative groups) increases, the local optimality is strengthened by holding overall $k$-lateral deviations.}
    \label{fig:cube_eg}
\end{figure*}

We model a multi-agent system as a common interest game where some (but not all) groups of agents can collaborate in selecting their actions to maximize the system welfare.
We particularly focus on the case where a collaborative action takes the form of a group best response, i.e., a group of agents updating their actions in response to the remaining players' actions.
As the size and number of these collaborative groups increase, a coordinated group decision has a larger impact on system behavior.
To range the level of collaboration between the fully distributed setting (where no agents can collaborate) and the fully centralized setting (where all agents can collaborate collectively), we consider the cases where groups of up to $k$ agents can collaborate.
In these collaborative environments, a stable state of the system is that of the \kstrong Nash equilibrium~\cite{aumann1959acceptable}.
Researchers have studied the existence \cite{nessah2014existence} and computation \cite{gatti2017verification} of strong Nash equilibria in settings including congestion games \cite{holzman1997strong}, lexicographical games \cite{harks2009strong}, and Markov games \cite{clempner2020finding}.
This work applies these concepts to multi-agent systems.
To understand the possible benefits of collaboration to system performance, we quantify how well \kstrong Nash equilibria approximate the optimal welfare, termed the \kstrong price of anarchy~\cite{epsteinStrongEquilibriumCost2007,andelmanStrongPriceAnarchy2009}.
To understand the possible cost of collaboration, we analyze the running time and transient performance of agent-based dynamics, which converge to \kstrong Nash equilibria.

Distributed learning in games has been a widely studied area in controls~\cite{Barreiro-Gomez2016}, but the ability to reach equilibrium with coalitional best responses has not yet been studied; we thus study the added run time of collaborative algorithms.
Quantifying the \kstrong price of anarchy has been studied in network formation games~\cite{epsteinStrongEquilibriumCost2007,andelmanStrongPriceAnarchy2009} and load balancing games \cite{fiat2007strong,epstein2012price,chien2009strong}, as well as more general utility maximizing games~\cite{bachrachStrongPriceAnarchy2014,feldman2015unified}.
In many of these, the bounds are either not tight (particularly for finitely many players) or hold for equilibria which need not exist.
By focusing on the class of common interest games, we guarantee the existence of collaborative equilibria, provide tight approximation bounds, and develop new insights into collaborative multi-agent optimization.

\textbf{\textit{Organization -}} This work provides tools to understand the benefits and costs of collaborative communication by studying the qualities of \kstrong Nash equilibria.
In \cref{subsec:smooth}, we consider the case where groups of agents are designed to maximize the system welfare and introduce the notion of \lmksmooth~games (a generalization of smooth games~\cite{Roughgarden2012} and coalitionally smooth games~\cite{bachrachStrongPriceAnarchy2014}), and provide bounds on the \kstrong price of anarchy.
Then, in \cref{subsec:ra}, we focus on the well-studied setting of distributed resource allocation problems~\cite{Gairing2009,Bilo2016,Paccagnan2018a,zhangMarkovGamesDecoupled2023, kondaOptimalDesignBest2022,Ferguson2022robust}, and provide tight bounds on the \kstrong price of anarchy via the solution of a tractable linear program.
\cref{fig:spoa_plot_4} plots these bounds and demonstrates how increased collaboration improves efficiency guarantees in several classes of resource allocation problems.
In \cref{sec:dyn}, we consider the effects of group decision-making on agent-based dynamics; specifically, we show the added run-time complexity of coalitional round-robin dynamics and provide transient performance guarantees of asynchronous best response dynamics.
We support our findings with numerical examples.
In \cref{sec:design}, we consider that the system operator may be able to design the agents' objective separately from the system welfare; we provide a generalized technique for bounding the \kstrong price of anarchy in this setting.
In \cref{subsec:gra}, we again focus on the setting of resource allocation and provide two linear programs to lower and upper bound the attainable \kstrong price of anarchy guarantee via utility design.


\section{Preliminaries}
Throughout, we will denote $[n] = \{1,\ldots,n\}$.
We will regularly use the binomial coefficient $\binom{n}{k} = \frac{n!}{(n-k)!k!}$ in constructing optimization problems; we define this value as $0$ when $n < k$ for ease of notation.

\subsection{Collaborative Decision Making}\label{subsec:collab}
Consider a finite set of agents $N = \{1,\ldots,n\}$.
Each agent $i \in N$ selects an action $a_i$ from a finite action set $\mathcal{A}_i$.
When each agent selects an action, we will denote their joint action by the tuple $a = (a_1,\ldots,a_n) \in \mathcal{A} = \mathcal{A}_1\times\cdots\times\mathcal{A}_n$.
Let $G = (N,\mathcal{A})$ be a tuple encoding the components of the agent environment.
The system's performance is dictated by the agents' actions; as such, for each joint-action $a$, we assign a system welfare $W(a)$ where $W:\mathcal{A}\rightarrow \mathbb{R}_{\geq 0}$ is the system designer's objective function.
With this, we let the tuple $(G,W)$ denote a \emph{multi-agent system} (often referred to as a system), which defines the primitives of the system designer's problem of designing an effective control algorithm.

The system designer would like to configure the agents to reach a joint action that maximizes the system welfare, i.e.,
\begin{equation}
\Opt{a} \in \argmax_{a \in \mathcal{A}} W(a).
\end{equation}
Though this system state is ideal, it may be difficult to attain as 1) solving for the optimal allocation can be combinatorial and in some cases (including those from \cref{subsec:ra}) NP-hard~\cite{Gairing2009}, and 2) it requires a centralized authority to control all agents, which may be practically or logistically difficult.
To resolve this, we will consider that agents make decisions in a decentralized manner.

Fully distributing the decision-making involves designing each agent to update their action locally and has been widely studied and developed to guarantee reasonable system behavior \cite{antonelli2013interconnected}; however, fully distributing decision-making may often become unnecessary as emerging communication technologies enable \textit{collaborative inter-agent decision-making}\cite{das2019tarmac}.
To implement one such collaborative system architecture, a system operator must make two decisions: 1) which group of agents can collaborate on their decisions (possibly subject to some operational constraints), and 2) how the agents should collaborate on their decisions.
A natural choice for the latter is a group best response.
Let $\Gamma \subseteq N$ be a group of agents endowed with the ability to collaboratively select a \textit{group action} $a_\Gamma \in \mathcal{A}_{\Gamma} = {\prod}_{i \in \Gamma} \mathcal{A}_i$, which they select by maximizing the system welfare over their group action-set,
\begin{equation}\label{eq:gamma_welf_update}
    a_{\Gamma} \in \argmax_{a_\Gamma^\prime \in \mathcal{A}_\Gamma} W(a_\Gamma^\prime,a_{-\Gamma}),
\end{equation}
where $a_{-\Gamma}$ denotes the actions of the players $i \in N \setminus \Gamma$.
If there are multiple elements in the argmax, the group breaks them at random unless they can remain with their current action.

Intuitively, a group best responding and collaboratively maximizing the system welfare should lead to direct improvements to system performance; however, one can consider other group decision-making rules as well.
In particular, in \cref{sec:design}, we will consider that the system designer can design the agents' objective separately from the system objective as a means to further shape system behavior.
In either case, one would imagine that the greater the collaborative structure, the greater the impact on emergent behavior.

For the system operator's decision over which groups should collaborate, let $\mathcal{C} \subseteq 2^N$ denote the \emph{collaboration set}, or the set of groups of agents ($\Gamma \in \mathcal{C}$) able to collaborate their decisions.
These collaborations can overlap--where agents can partake in multiple, disparate collaborations--and vary in size.
For example, if agents send signals through a communication network~\cite{saad2009coalitional}, we will have \hbox{$\mathcal{C} = \{ (i,j) \in N^2 \mid (i,j) \in E\}$} where $E$ are the edges in a communication graph.
If agents are allowed to communicate with each other one at a time and make pairwise decisions~\cite{bayram2013multirobot}, then $\mathcal{C} = \{ (i,j) \in N^2\}$.
If agents can only communicate with others within a local proximity~\cite{cappello2021distributed}, then $\mathcal{C} = \{ \Gamma \subseteq N \mid \rho(i,j) \leq d~\forall i,j\in \Gamma\}$ where $\rho$ measures the distance between two agents and $d$ is a maximum communication range.
Once the system operator decides on the collaborative structure and the group decision-making protocol, the agents' decision-making process forms a collaborative multi-agent system, denoted by the tuple $(G,W,\mathcal{C})$.

As we vary the number and size of collaborative sets, we can consider control paradigms somewhere between centralized (i.e., $\{N\} \in \mathcal{C}$) and fully distributed (i.e., $\mathcal{C} = \left\{\{1\},\{2\},\ldots,\{n\}\right\}$.
This work seeks to understand the efficacy of different levels of communication/collaboration.
To more effectively quantify this, we consider a specific type of collaboration set in which we can range between the centralized and distributed extremes.

\subsection{k-Strong Nash Equilibria}
We consider the collaboration sets that contain groups of agents up to size $k$.
Let \hbox{$\mathcal{C}_k = \{\Gamma \subseteq N \mid |\Gamma| = k \}$} denote the subsets of exactly $k$ agents and $\mathcal{C}_{[k]} = \bigcup_{\zeta \in [k]} \mathcal{C}_\zeta$ be the subsets that contain at most $k$ agents.
When $k=1$, we recover the fully distributed setting, and when $k=n$, we recover the fully centralized setting.
As we vary $k$ between $1$ and $n$, we sweep through different levels of communication and collaboration.

\begin{figure}
    \centering
    \includegraphics[width=0.485\textwidth]{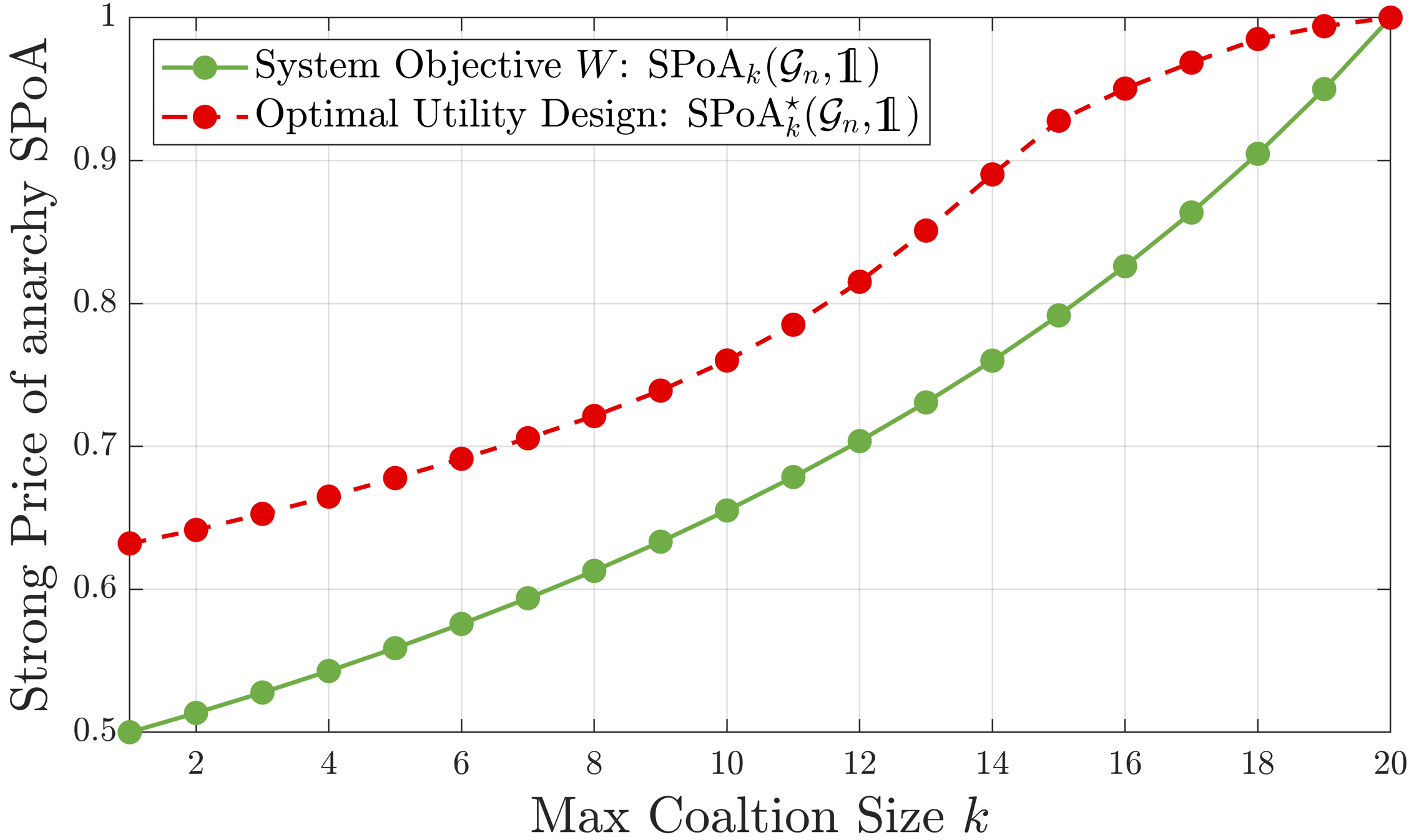}
    \caption{Strong Price of Anarchy in resource covering games with $n=20$ players and coalitions up to size $k$ (horizontal axis). As the size of groups that are allowed to collaborate grows, so too does the approximation ratio (i.e., strong price of anarchy) of a \kstrong Nash equilibrium. The efficiency of an equilibrium can be further improved by designing the utility functions agents are set to maximize. The solid green line is the \kstrong price of anarchy when agents maximize the system objective (generated by \cref{thm:spoa}). The dashed red line is an upper bound on the \kstrong price of anarchy while using an optimal utility design (generated by \cref{prop:util_upper_bound}).}
    \label{fig:spoa}
\end{figure}

In the game-theoretic approach to multi-agent systems, a Nash equilibrium is a joint action where no agent can unilaterally deviate their action to improve the system welfare~\cite{aumann1959acceptable}.
We generalize this concept to the setting of collaborative decision-making by considering a \emph{\kstrong Nash equilibrium} as a joint action where no group of $k$ agents can deviate their group's actions to improve the welfare.
\begin{definition}\label{def:ksnash}
A joint-action $\KSNash{a} \in \mathcal{A}$ is a \kstrong Nash equilibrium for the common-interest game $(G,W,\mathcal{C}_{[k]})$ if
\begin{equation}\label{eq:ksnash_def}
    W(\KSNash{a}) \geq W(a^\prime_\Gamma,\KSNash{a}_{-\Gamma}),~\forall a^\prime_\Gamma \in \mathcal{A}_\Gamma,~\Gamma \in \mathcal{C}_{[k]}.
\end{equation}
\end{definition}
Let $\KSNE(G,W) \subseteq \mathcal{A}$ denote the set of all \kstrong Nash equilibria.
Note that when $k=1$, we recover the classical definition of a Nash equilibrium, and when $k=n$, the equilibrium condition implies global optimality.
\cref{def:ksnash} differs slightly from the literature, where \kstrong Nash equilibria are defined by no group of agents deviating to a new group action that is Pareto-optimal for the group (i.e., no agent receives a lower payoff with respect to their individual utility function)~\cite{aumann1959acceptable}; when the agents respond to a common interest objective, the definitions are equivalent.
Additionally, in general games, \kstrong Nash equilibria need not exist; however, that is not the case in our setting due to the common-interest structure we impose on agent decision-making.
\begin{proposition}\label{prop:exist}
In a system $(G,W)$ with collaboration set $\mathcal{C}_{[k]}$ for any $k \in [n]$, a $k$-strong Nash equilibrium exists.
\end{proposition}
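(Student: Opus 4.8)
The plan is to exploit the fact that the system welfare $W$ is a common objective that every collaborating group is trying to maximize, so that any group best response weakly increases $W$. Since $\mathcal{A}$ is finite, $W$ attains a maximum over $\mathcal{A}$; I would argue that any maximizer is a $k$-strong Nash equilibrium, which immediately yields existence.

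First I would fix $k \in [n]$ and pick any $\Opt{a} \in \argmax_{a \in \mathcal{A}} W(a)$; such a point exists because $\mathcal{A} = \mathcal{A}_1 \times \cdots \times \mathcal{A}_n$ is a finite set and $W$ is real-valued. Next I would verify the equilibrium condition in \cref{def:ksnash} directly: for every group $\Gamma \in \mathcal{C}_{[k]}$ and every $a'_\Gamma \in \mathcal{A}_\Gamma$, the joint action $(a'_\Gamma, \Opt{a}_{-\Gamma})$ lies in $\mathcal{A}$, so by optimality of $\Opt{a}$ we have $W(\Opt{a}) \geq W(a'_\Gamma, \Opt{a}_{-\Gamma})$. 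This is exactly \eqref{eq:ksnash_def} with $\KSNash{a} = \Opt{a}$, hence $\Opt{a} \in \KSNE(G,W)$ and the set of $k$-strong Nash equilibria is nonempty.

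A secondary (and arguably more illuminating) route, if one wants a constructive argument rather than just pointing at the global optimum, is a potential-function / monotonicity argument: start from an arbitrary $a^0 \in \mathcal{A}$ and repeatedly let some group $\Gamma \in \mathcal{C}_{[k]}$ that can strictly improve perform the update in \eqref{eq:gamma_welf_update}; each such step strictly increases $W$, and since $W$ takes finitely many values on the finite set $\mathcal{A}$, the process terminates after finitely many steps at a state where no group in $\mathcal{C}_{[k]}$ can strictly improve — i.e., a $k$-strong Nash equilibrium. This also foreshadows the dynamics analysis of \cref{sec:dyn}. Either argument suffices; I would likely present the short optimality-based one and remark that the same conclusion follows from the fact that $W$ acts as an exact potential for the collaborative decision process.

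There is essentially no hard obstacle here — the only thing to be careful about is the tie-breaking clause after \eqref{eq:gamma_welf_update} in the monotonicity version (a group indifferent among several best responses stays put, so no cycling among equal-welfare states can occur), and the fact that the $k$-strong condition over $\mathcal{C}_{[k]}$ is implied by optimality because each group deviation only perturbs a sub-tuple of the joint action while leaving the complement $\Opt{a}_{-\Gamma}$ fixed, so the deviated profile remains a feasible point of $\mathcal{A}$ over which $\Opt{a}$ is already optimal.
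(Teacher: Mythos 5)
Your primary argument is exactly the paper's proof: the global maximizer $\Opt{a}$ of $W$ over the finite set $\mathcal{A}$ satisfies \eqref{eq:ksnash_def} because any group deviation $(a'_\Gamma,\Opt{a}_{-\Gamma})$ is still a point of $\mathcal{A}$, so it is a $k$-strong Nash equilibrium for every $k\in[n]$. The proposal is correct and takes essentially the same approach; the additional improvement-dynamics argument is a valid (and compatible) alternative but not needed.
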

The proof appears in the appendix.

The main focus of this work is understanding how equilibrium performance changes with the level of collaborative communication.
Notice that \eqref{eq:ksnash_def} serves as a local optimality guarantee in the neighborhood of $k$-lateral deviations.
\cref{fig:cube_eg} depicts this for a three-player matrix game; when $k=1$, a $1$-strong Nash equilibrium (or just Nash equilibrium) is optimal over the unilateral deviations, when $k=2$ a $2$-strong Nash equilibrium is optimal over the bilateral deviations, and when $k=3=n$, the $3$-strong Nash equilibrium is optimal over the whole joint-action space.
From this, we observe that the local optimality guarantee is strengthened as we increase the level of collaboration $k$ (i.e., $k^\prime {\rm SNE}  \subseteq \KSNE$ for $k^\prime > k$).

To quantify the effect of varying $k$ on equilibrium performance, we consider the ratio of worst-case equilibrium welfare and the optimal attainable welfare, termed the \emph{\kstrong price of anarchy}.
\begin{equation}
    \spoa_k(G,W) = \frac{\min_{\KSNash{a} \in \KSNE(G,W)} W(\KSNash{a})}{\max_{\Opt{a} \in \mathcal{A}} W(\Opt{a}) } \in [0,1],
\end{equation}
where we let $0/0$ be defined as $1$ to ignore the degenerate case when no welfare is attainable.
In the multi-agent system $(G,W)$ with communication structure $\mathcal{C}_{[k]}$, every \kstrong Nash equilibrium approximates the optimal solution at least as well as $\spoa_k(G,W)$.
Accordingly, we will use the \kstrong price of anarchy to understand the efficiency associated with collaborative decision-making.
For example, in \cref{fig:spoa}, we depict the \kstrong price of anarchy in resource covering games~\cite{Gairing2009} for $1 \leq k \leq n$, illustrating the performance guarantees attainable between centralized and distributed control paradigms.

\subsection{Summary of Contributions}
This work studies the benefits and costs of increased collaborative communication within multi-agent systems.
Our contributions come threefold:\\
1) In \cref{sec:quant}, we provide tools to quantify the \kstrong price of anarchy when agents optimize the system objective.
We introduce \lmksmooth~games and provide a \kstrong price of anarchy guarantee using the parameters $\lambda$ and $\mu$. We then focus on the class of resource allocation games, where in Proposition~\ref{prop:smoothLP}, we show that these parameters can be found via the solution to a tractable linear program. In \cref{thm:spoa}, we show that combining the constraints of each of the $k$ linear programs gives a tight bound. Figure \ref{fig:spoa_plot_4} depicts the \kstrong price of anarchy for several classes of resource allocation games.\\
2) In \cref{sec:dyn}, we study collaborative dynamics that reach these equilibria. In \cref{subsec:rr}, we introduce the coalitional round-robin dynamics and show that an equilibrium is reached in a finite number of best responses and that the number of welfare comparisons grows with a small-base exponential of $k$. In \cref{subsec:abr}, we introduce the asynchronous coalitional best response dynamics, which we show converge almost surely. Further, if the game is \lmksmooth, then we provide a bound on the transient performance (or the cumulative welfare along the dynamics). We support these findings with a numerical study in \cref{subsec:sim}.\\
3) In \cref{sec:design}, we consider how to improve the design of a group's decision-making process. By providing the agents with a new, designed objective function, the system designer may alter the set of equilibria and ideally increase the \kstrong price of anarchy. In \cref{subsec:gsmooth}, we generalize the notion of coalitional smoothness to the setting where the agents' objective differs from the system welfare, and in \cref{thm:gsmoothLP}, we show how we can construct an optimal utility rule. \cref{fig:spoa_opt_plot_4} shows the \kstrong price of anarchy under the optimal utility design for resource allocation games, demonstrating the added benefit of designing how groups of agents make decisions.

\begin{figure*}[t]
    \centering
    \includegraphics[width=0.975\textwidth]{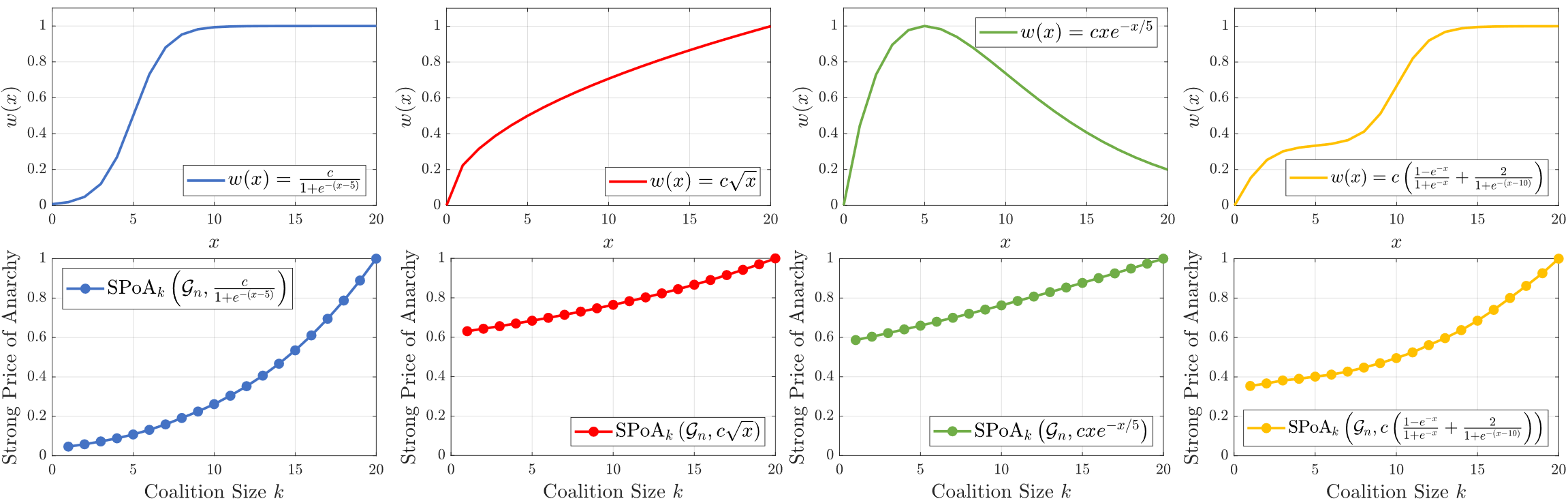}
    \caption{Tight \kstrong price of anarchy bounds for resource allocation games with various welfare functions. We illustrate four settings of local welfare function (top, left to right), and for each, we use \cref{thm:spoa} to generate tight bounds on the \kstrong price of anarchy for all $1 \leq k \leq n$. The bottom figures show these bounds and illustrate how increased inter-agent collaboration increases our efficiency guarantees on equilibrium system welfare.}
    \label{fig:spoa_plot_4}
\end{figure*}

\section{Quantifying $k$-Strong Price of Anarchy}\label{sec:quant}
\subsection{Coalitionally Smooth Games}\label{subsec:smooth}
We first consider the efficiency of \kstrong Nash equilibria for general multi-agent systems.
This efficiency--quantified by the \kstrong price of anarchy--is conditioned on the system welfare $W$ and the agent decision-making environment $G$.
In \cref{def:smooth}, we provide a condition on a system $(G,W)$ that will be useful in bounding the \kstrong price of anarchy.

\begin{definition}\label{def:smooth}
A system $(G,W)$ is \lmksmooth, where $\lambda,\mu \in \mathbb{R}_{\geq 0}^k$, if for all $a,a^\prime \in \mathcal{A}$
\begin{equation}\label{eq:smooth_def}
\frac{1}{\binom{n}{\zeta}} \sum_{\Gamma \in \mathcal{C}_\zeta} W(a_\Gamma^\prime,a_{-\Gamma}) \geq \lambda_\zeta W(a^\prime) - \mu_{\zeta}W(a),~\forall \zeta \in [k].
\end{equation}
\end{definition}
In \eqref{eq:smooth_def}, we provide a constraint on the welfare function stating that the average effect of a group of size $\zeta$ deviating their action from $a$ to $a^\prime$ is lower bounded by a linear combination of the welfare of $a$ and $a^\prime$.
The term smooth is in reference to the welfare function's change over the joint-action space being bounded by \eqref{eq:smooth_def}.
Additionally, \cref{def:smooth} extends the classic notion of smooth games~\cite{Roughgarden2012} and coalitional smoothness for strong equilibria~\cite{bachrachStrongPriceAnarchy2014} to the setting of $k$-coalitions in common interest games.

In effect, every system $(G,W)$ is smooth with $\lambda_\zeta = \mu_\zeta =0$ for all $\zeta \in [k]$, but some parameters $(\lambda,\mu)$ are more useful than others.
In \cref{prop:smooth}, we show that the parameters $\lambda$ and $\mu$ from \cref{def:smooth} can be used to lower bound the \kstrong price of anarchy.

\begin{proposition}\label{prop:smooth}
A system $(G,W)$ that is \lmksmooth~has $k$-strong price of anarchy satisfying
\begin{equation}\label{eq:smooth_prop_bound}
\spoa_k(G,W) \geq \frac{\lambda_\zeta}{1+\mu_\zeta},~\forall \zeta \in [k].
\end{equation}
\end{proposition}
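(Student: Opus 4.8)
The plan is to fix an arbitrary $\zeta \in [k]$ and a $k$-strong Nash equilibrium $\KSNash{a}$, then feed the smoothness inequality~\eqref{eq:smooth_def} with the particular choice $a = \KSNash{a}$ and $a' = \Opt{a}$ (an optimal joint action). The left-hand side of~\eqref{eq:smooth_def} then becomes the average of $W(\Opt{a}_\Gamma, \KSNash{a}_{-\Gamma})$ over all groups $\Gamma \in \mathcal{C}_\zeta$, i.e.\ over all $\binom{n}{\zeta}$ subsets of size exactly $\zeta$. The key observation is that each such $\Gamma$ is an admissible coalition, since $\mathcal{C}_\zeta \subseteq \mathcal{C}_{[k]}$ whenever $\zeta \le k$; hence the equilibrium condition~\eqref{eq:ksnash_def} applies with the deviation $a'_\Gamma = \Opt{a}_\Gamma$, giving $W(\KSNash{a}) \geq W(\Opt{a}_\Gamma, \KSNash{a}_{-\Gamma})$ term by term.

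Averaging these inequalities over $\Gamma \in \mathcal{C}_\zeta$ shows that the left-hand side of~\eqref{eq:smooth_def} is at most $W(\KSNash{a})$, so chaining with smoothness yields
\[
W(\KSNash{a}) \;\geq\; \frac{1}{\binom{n}{\zeta}} \sum_{\Gamma \in \mathcal{C}_\zeta} W(\Opt{a}_\Gamma, \KSNash{a}_{-\Gamma}) \;\geq\; \lambda_\zeta W(\Opt{a}) - \mu_\zeta W(\KSNash{a}).
\]
Rearranging gives $(1+\mu_\zeta) W(\KSNash{a}) \geq \lambda_\zeta W(\Opt{a})$, and dividing by $(1+\mu_\zeta) W(\Opt{a})$ produces $W(\KSNash{a})/W(\Opt{a}) \geq \lambda_\zeta/(1+\mu_\zeta)$. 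Since $\KSNash{a}$ was an arbitrary $k$-strong Nash equilibrium, taking the minimum over $\KSNE(G,W)$ gives $\spoa_k(G,W) \geq \lambda_\zeta/(1+\mu_\zeta)$, and since $\zeta \in [k]$ was arbitrary the bound holds for all $\zeta \in [k]$ as claimed.

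**The main obstacle** is not really an obstacle but a bookkeeping point: one must handle the degenerate cases cleanly so the division steps are legitimate. If $W(\Opt{a}) = 0$ then the price of anarchy is defined to be $1$ and the bound $\lambda_\zeta/(1+\mu_\zeta) \le 1$ needs to be argued separately (indeed, plugging $a=a'$ into~\eqref{eq:smooth_def} forces $\lambda_\zeta \le 1 + \mu_\zeta$ whenever some joint action has positive welfare, and the claim is vacuous otherwise since $W \equiv 0$). Also, one should note the existence of a $k$-strong Nash equilibrium is guaranteed by~\cref{prop:exist}, so the minimum in the definition of $\spoa_k$ is over a nonempty set. Beyond these edge cases, the argument is a direct two-line averaging plus a rearrangement, exactly paralleling the classical smoothness/price-of-anarchy argument of~\cite{Roughgarden2012} with the unilateral deviation replaced by the averaged $\zeta$-lateral deviation.
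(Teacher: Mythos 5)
Your proposal is correct and follows essentially the same argument as the paper's proof: apply the equilibrium condition to each coalition $\Gamma \in \mathcal{C}_\zeta$ with the deviation $\Opt{a}_\Gamma$, average over the $\binom{n}{\zeta}$ coalitions, chain with the smoothness inequality, and rearrange. Your extra remarks on the degenerate case $W(\Opt{a})=0$ and on existence via \cref{prop:exist} are sound bookkeeping but do not change the substance of the argument.
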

\begin{proof}
Let $\KSNash{a} \in \mathcal{A}$ denote a \kstrong Nash equilibrium in the system $(G,W)$ (i.e., satisfying \cref{def:ksnash}), and let $\Opt{a} \in \argmax_{a \in \mathcal{A}} W(a)$ denote an optimal joint action.
For any $\zeta \in [k]$, we have
\begin{subequations}
\begin{align}
W(\KSNash{a}) &= \frac{1}{\binom{n}{\zeta}} \sum_{\Gamma \in \mathcal{C}_\zeta} W(\KSNash{a})\label{eq:smooth_proof_a}\\
&\geq \frac{1}{\binom{n}{\zeta}} \sum_{\Gamma \in \mathcal{C}_\zeta} W(\Opt{a}_\Gamma, \KSNash{a}_{-\Gamma})\label{eq:smooth_proof_b}\\
&\geq \lambda_\zeta W(\Opt{a}) - \mu_{\zeta}W(\KSNash{a}).\label{eq:smooth_proof_c}
\end{align}
\end{subequations}
Where \eqref{eq:smooth_proof_a} holds from $|\mathcal{C}_\zeta| = \binom{n}{\zeta}$, \eqref{eq:smooth_proof_b} holds from \cref{def:ksnash}, and \eqref{eq:smooth_proof_c} holds from \cref{def:smooth}.
Rearranging, we get $W(\KSNash{a})/W(\Opt{a}) \geq \lambda_\zeta/(1+\mu_\zeta)$.
\end{proof}
\eqref{eq:smooth_prop_bound} provides $k$ lower bounds on the \kstrong price of anarchy;
accordingly, a \kstrong Nash equilibrium approximates the system optimal at least as well as $\max_{\zeta \in [k]} \{\lambda_\zeta/(1+\mu_\zeta) \}$.
Often, the best lower bound is provided by $\zeta = k$; however, this is not true in general.
As such, we must consider each of the constraints in \eqref{eq:smooth_def} to derive the best bounds.

The efficiency bounds of this form are valuable for several reasons, including:
1) they can be used to provide insights on the transient guarantees of various multi-agent dynamics (see \cref{sec:dyn},
2) they easily generalize to broader equilibrium concepts (subject of future work), and
3) if parameters $(\lambda,\mu)$ can be shown to satisfy \eqref{eq:smooth_def} for a set of systems $\mathcal{S}$, then each system $(G,W) \in \mathcal{S}$ inherits the efficiency guarantee of \eqref{eq:smooth_prop_bound}.
This last point is particularly pertinent, as system models may be subject to noise, mischaracterizations, or changes over time.
If the efficiency guarantee holds across many similar systems, then the guarantees are essentially robust to these issues.

In the \cref{subsec:ra}, we will provide methods to find coalitional smoothness parameters for classes of resource allocation games via tractable linear programs.

\subsection{Resource Allocation Games}\label{subsec:ra}
In this subsection, we consider the well-studied class of resource allocation games~\cite{Paccagnan2018a,Gairing2009,zhangMarkovGamesDecoupled2023, kondaOptimalDesignBest2022,Vetta2002}.
Consider a set of resources or tasks $\mathcal{R} = \{1,\ldots,R\}$, to which agents are assigned, i.e., agent $i \in N$ selects a subset of these resources as its action $a_i \subseteq \mathcal{R}$ from a constrained set of subsets $\mathcal{A}_i \subseteq 2^\mathcal{R}$.
Each resource $r \in \mathcal{R}$ has a value $v_r \geq 0$; the welfare contributed by a resource is $v_r w(|a|_r)$, where $w: \{0,\ldots,n\} \rightarrow \mathbb{R}_{\geq 0}$ captures the added benefit of having multiple agents assigned to the same resources and $|a|_r$ is the number of agents assigned to $r$ in allocation $a$.
Assume that $w(0) = 0$ as no welfare is contributed by resources assigned to zero agents and further that $w(y) >0$ for all $y > 0$.
The system welfare is thus
\begin{equation}\label{eq:ra_welf}
    W(a) = \sum_{r \in \mathcal{R}} v_r w(|a|_r).
\end{equation}
For ease of notation, we will refer to the system welfare by the local welfare rule $w$, noting in the agent-environment $G$, it generates a welfare function $W$ via \eqref{eq:ra_welf}.

As discussed in \ref{subsec:smooth}, we wish to find efficiency bounds that hold over a class of resource allocation problems.
Let $G = (\mathcal{R},N,\mathcal{A},\{v_r\}_{r \in \mathcal{R}})$ denote a resource allocation problem, and let $\gee_n$ denote the set of all such resource allocation problems with at most $n$ agents.
In Proposition~\ref{prop:smoothLP}, we propose a tractable linear program whose solution provides parameters $(\lambda,\mu)$ which satisfy \cref{def:smooth} for every system $(G,w) \in \gee_n \times \{w\}$.
From \cref{prop:smooth}, this also provides a lower bound on the \kstrong price of anarchy for the class of resource allocation problems with local welfare $w$.
\begin{proposition}\label{prop:smoothLP}
Each resource allocation problem $(G,w) \in \mathcal{G}_n\times\{w\}$ is \lmksmooth~with $\lambda_\zeta = 1/\nu_\zeta^\star$ and $\mu_\zeta = \rho_\zeta^\star / \nu_\zeta^\star - 1$, where $(\rho_\zeta^\star,\nu_\zeta^\star)$ is a solution to the linear program \eqref{opt:smooth}:
\begin{align}
&(\rho_\zeta^\star,\nu_\zeta^\star)\in \argmin_{\rho\geq\nu \geq 0} ~~~ \rho \nonumber\\
&{\rm s.t.} \hspace{8pt}0 \geq w(o\hs +\hs x)-\rho w(e\hs +\hs x) + \nonumber\\
&~~\nu\hs \left( \hspace{-1.2pt} \hs \binom{n}{\zeta} w(e\hs +\hs x) \hs -\hs  \hs \sum_{\substack{0\leq\alpha\leq e\\ 0\leq\beta\leq o\\ \alpha+\beta \leq \zeta }} \hs {e\choose \alpha} {o \choose \beta}\binom{n\hs - \hs e\hs - \hs o}{\zeta \hs - \hs \alpha \hs - \hs \beta} w (e\hs +\hs x\hs +\hs \beta\hs -\hs \alpha)  \hs \right)\nonumber\\
&\hspace{160pt}\forall (e,x,o) \in \mathcal{I}\tag{P$\zeta$}\label{opt:smooth}
\end{align}
\end{proposition}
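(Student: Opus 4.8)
The plan is to reduce the coalitional smoothness inequality \eqref{eq:smooth_def} to a single-resource statement, identify that statement with constraint \eqref{opt:smooth}, and then conclude via \cref{prop:smooth}. Fix $\zeta\in[k]$, a problem $G\in\mathcal{G}_n$, and allocations $a,a'\in\mathcal{A}$. Because the welfare \eqref{eq:ra_welf} is a $v$-weighted sum over resources with every $v_r\geq0$, both sides of \eqref{eq:smooth_def} split additively over $\mathcal{R}$, so it suffices to prove, for each fixed resource $r$, the inequality obtained from \eqref{eq:smooth_def} by dropping the weights and replacing $W$ with $w(|\cdot|_r)$. The essential point is that this per-resource inequality depends on $G$ only through the triple $(e,x,o)$, where $e$ is the number of agents using $r$ in $a$ but not in $a'$, $o$ the number using $r$ in $a'$ but not in $a$, and $x$ the number using $r$ in both; nothing else about $G$ enters. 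Letting $(e,x,o)$ range over the finite index set $\mathcal{I}$ of profiles realizable with at most $n$ agents therefore handles every $G\in\mathcal{G}_n$ simultaneously, which is exactly what makes the guarantee uniform over $\mathcal{G}_n\times\{w\}$.

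The combinatorial heart of the argument is the evaluation of $\sum_{\Gamma\in\mathcal{C}_\zeta}w(|(a'_\Gamma,a_{-\Gamma})|_r)$. For a coalition $\Gamma$ of size $\zeta$ that contains $\alpha$ of the $e$ ``departing'' agents and $\beta$ of the $o$ ``arriving'' agents, the congestion on $r$ after $\Gamma$ switches to $a'_\Gamma$ is exactly $e+x+\beta-\alpha$ (the $x$ shared users, and every agent outside the three classes, keep their use of $r$ unchanged), and the number of such coalitions is $\binom{e}{\alpha}\binom{o}{\beta}\binom{n-e-o}{\zeta-\alpha-\beta}$; the convention $\binom{m}{\ell}=0$ for $m<\ell$ automatically discards the index combinations that cannot occur, and a Vandermonde identity certifies that these multiplicities sum to $\binom{n}{\zeta}=|\mathcal{C}_\zeta|$. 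Summing over $(\alpha,\beta)$ reproduces precisely the sum appearing inside \eqref{opt:smooth}. Substituting $\lambda_\zeta=1/\nu_\zeta^\star$ and $\mu_\zeta=\rho_\zeta^\star/\nu_\zeta^\star-1$ into the per-resource form of \eqref{eq:smooth_def}, multiplying through by the positive number $\nu_\zeta^\star$, and rearranging turns the inequality into constraint \eqref{opt:smooth} evaluated at $(e,x,o)$; since $(\rho_\zeta^\star,\nu_\zeta^\star)$ is feasible, the per-resource inequality holds for every profile in $\mathcal{I}$, hence for every resource, hence \eqref{eq:smooth_def} holds with the claimed parameters for all $\zeta\in[k]$.

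It then remains to confirm that $(\lambda,\mu)$ is admissible. Program \eqref{opt:smooth} has two decision variables and one linear constraint per element of the finite set $\mathcal{I}$ (the numbers $w(\cdot)$ being fixed constants), so it is a genuinely tractable linear program; one checks that it is feasible (take $\rho=\nu$ with $\nu$ sufficiently large) and attains its minimum, and that $\nu_\zeta^\star>0$ by evaluating a degenerate profile such as $e=x=0,\ o=1$, which forces $\nu_\zeta^\star w(1)>0$ using $w(0)=0$ and $w(1)>0$. Consequently $\lambda_\zeta=1/\nu_\zeta^\star$ is well defined and positive, while $\mu_\zeta=\rho_\zeta^\star/\nu_\zeta^\star-1\geq0$ follows from the program's constraint $\rho\geq\nu$, so $(\lambda,\mu)\in\mathbb{R}_{\geq0}^k$ and the system is \lmksmooth; \cref{prop:smooth} then gives $\spoa_k(G,w)\geq\lambda_\zeta/(1+\mu_\zeta)=1/\rho_\zeta^\star$. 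I expect the single-resource count to be the main obstacle: getting the post-deviation congestion and the coalition multiplicities right, pinning down $\mathcal{I}$ exactly (including the degenerate profiles near the $w(0)=0$ boundary), and verifying term by term that the rearranged per-resource inequality coincides with \eqref{opt:smooth}.
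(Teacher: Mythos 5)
Your route is the same as the paper's: split \eqref{eq:smooth_def} resource-by-resource using $v_r\geq 0$, reduce each resource to its label $(e,x,o)$, count coalitions of size $\zeta$ containing $\alpha$ of the $e$ departing and $\beta$ of the $o$ arriving agents with multiplicity $\binom{e}{\alpha}\binom{o}{\beta}\binom{n-e-o}{\zeta-\alpha-\beta}$ and post-deviation congestion $e+x+\beta-\alpha$, and then match the resulting per-label inequality (the paper's \eqref{eq:param_constraint}) to the constraints of \eqref{opt:smooth} by a change of variables. The one substantive slip is the bookkeeping of the factor $\binom{n}{\zeta}$ in that last identification. Substituting $\lambda_\zeta=1/\nu_\zeta^\star$, $\mu_\zeta=\rho_\zeta^\star/\nu_\zeta^\star-1$ into the per-label inequality and multiplying by $\nu_\zeta^\star$ yields
\[
0 \;\geq\; w(o+x)-\rho_\zeta^\star w(e+x)+\nu_\zeta^\star\Bigl(w(e+x)-\tfrac{1}{\binom{n}{\zeta}}\textstyle\sum_{\alpha,\beta}\binom{e}{\alpha}\binom{o}{\beta}\binom{n-e-o}{\zeta-\alpha-\beta}\,w(e+x+\beta-\alpha)\Bigr),
\]
which coincides with the constraint of \eqref{opt:smooth} only after replacing $\nu_\zeta^\star$ by $\binom{n}{\zeta}\nu_\zeta^\star$. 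So feasibility of $(\rho_\zeta^\star,\nu_\zeta^\star)$ in \eqref{opt:smooth} certifies smoothness with $\lambda_\zeta=1/\bigl(\binom{n}{\zeta}\nu_\zeta^\star\bigr)$ and $\mu_\zeta=\rho_\zeta^\star/\bigl(\binom{n}{\zeta}\nu_\zeta^\star\bigr)-1$, which is exactly the substitution $\nu=1/\bigl(\binom{n}{\zeta}\lambda_\zeta\bigr)$, $\rho=(1+\mu_\zeta)/\lambda_\zeta$ used in the paper's proof; the efficiency ratio $\lambda_\zeta/(1+\mu_\zeta)=1/\rho_\zeta^\star$ is unaffected, and the mismatch with the parameters as stated in the proposition is a normalization inconsistency already present between the paper's statement and its own proof, but your claim that the rearranged inequality equals \eqref{opt:smooth} term by term is not correct as written, and the term-by-term check you flag as the main obstacle is precisely where it breaks.

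The auxiliary step you add (feasibility of \eqref{opt:smooth}) is also wrong as argued: ``take $\rho=\nu$ sufficiently large'' fails whenever $w$ is superadditive. For example, with $w(2)>2w(1)$ (allowed, since only $w(0)=0$ and $w(y)>0$ are assumed), $\zeta=1$ and label $(e,x,o)=(2,0,0)$, the constraint reads $0\geq -\rho\,w(2)+2\nu\bigl(w(2)-w(1)\bigr)$, which along $\rho=\nu$ becomes $0\geq\nu\bigl(w(2)-2w(1)\bigr)$ and is violated for every $\nu>0$. The correct order is to first take $\nu$ large enough to satisfy the labels with $e+x=0$ (there $w(e+x)=0$ but the coalition sum is strictly positive) and then take $\rho$ large enough for the labels with $e+x\geq 1$, using $w(e+x)>0$. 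This step is peripheral---the proposition presupposes a solution to \eqref{opt:smooth} and the paper does not argue feasibility here---but as stated it would fail. Your positivity check via the label $(0,0,1)$, giving $\nu_\zeta^\star\geq 1/\binom{n-1}{\zeta-1}>0$, is fine, as is the Vandermonde check that the multiplicities sum to $\binom{n}{\zeta}$.
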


The constraints are parameterized by the triples $\mathcal{I} := \{(e,x,o)\in \mathbb{N}_{\geq 0}^3 \mid 1 \leq e+x+o \leq n\}$.
With the possibility of collaboration, an equilibrium becomes more difficult to characterize than in a fully distributed setting.
We circumvent this by introducing a parameterization which allows us to generalize the $\mathcal{O}\left(\sum_{\zeta = 1}^k{n \choose \zeta} m^\zeta\right)$ comparisons of \eqref{eq:ksnash_def} (where $m := \max_{i \in N} |\mathcal{A}_i|$) into $\mathcal{O}(n^3)$ linear inequalities.
Further, satisfying these inequalities provides parameters $(\lambda_\zeta,\mu_\zeta)$ that satisfy \cref{def:smooth}, leading to \eqref{opt:smooth} as a search for such parameters with the best \kstrong price of anarchy guarantee. 

\noindent\textit{Proof of Proposition~\ref{prop:smoothLP}}:
The proof largely relies on introducing a parameterization that lets us treat \eqref{eq:smooth_def} as a set of linear constraints.
Consider a resource allocation game $(G,w)\in \mathcal{G}_n\times\{w\}$ and any two actions $a,a^\prime \in \mathcal{A}$.
To each resource $r \in \mathcal{R}$, we assign a label $(e_r,x_r,o_r)$, where
\begin{align*}
    e_r &= \lvert\{ i \in N \mid r \in a_i \setminus a_i^\prime\}\rvert \\
    x_r &= \lvert\{ i \in N \mid r \in a_i \cap a_i^\prime\}\rvert \\
    o_r &= \lvert\{ i \in N \mid r \in a_i^\prime \setminus a_i\}\rvert.
\end{align*}
This is to say, $e_r$ denotes the number of agents utilizing resource $r$ in joint action $a$ but not $a^\prime$, $o_r$ is the number that uses resource $r$ in joint action $a^\prime$ but not $a$, and $x_r$ is the number that uses $r$ in both $a$ and $a^\prime$.
In the set of games $\gee_n$, let $\mathcal{I} = \{(e,x,o)\in \mathbb{N}_{\geq 0}^3 \mid 1 \leq e+x+o \leq n\}$ denote the set of possible labels, and
$\theta(e,x,o) := \sum_{r \in \mathcal{R}_{(e,x,o)}} v_r,$
where $\mathcal{R}_{(e,x,o)} = \{r \in \mathcal{R} \mid e_r = e, x_r = x, o_r=o\}$ denotes the set of resources with label $(e,x,o)$.
The parameter $\theta \in \mathbb{R}_{\geq 0}^{|\mathcal{I}|}$ is a vector with elements for each label.

We will now express the terms in \eqref{eq:smooth_def} using this parameterization.
Because $W(a) = \sum_{r \in \mathcal{R}} v_r w(|a|_r)$ depends only on the number of agents utilizing a resource; we can represent $|a|_r = e_r + x_r$ and write the system welfare as
\begin{align*}
    W(a) &= \sum_{r \in \mathcal{R}}v_r w(e_r+x_r)\\
    &=\sum_{(e,x,o) \in \mathcal{I}} \Bigg( \sum_{r \in \mathcal{R}_{e,x,o}} v_r \Bigg)w(e+x)\\
    &= \sum_{e,x,o}\theta(e,x,o)w(e+x).
\end{align*}
When not stated, the sum over $(e,x,o)$ is implied to be for each label in $\mathcal{I}$.
Similar steps can be followed to show $W(a^\prime) = \sum_{e,x,o} \theta(e,x,o)w(o+x)$.

Finally, the term $\sum_{\Gamma \in \mathcal{C}_\zeta} W(a_\Gamma^\prime,a_{-\Gamma})$ can similarly be transcribed by this parameterization:
\begin{align*}
&\sum_{\Gamma \in \mathcal{C}_\zeta}  W(a^\prime_\Gamma,a_{-\Gamma})\\
 &= \sum_{\Gamma\in\mathcal{C}_\zeta} \sum_{e,x,o} \sum_{r \in \mathcal{R}_{(e,x,o)}} \hspace{-2mm} v_r w(|a^\prime_\Gamma,a_{-\Gamma}|_r)\\
&= \sum_{e,x,o} \sum_{r \in \mathcal{R}_{(e,x,o)}}\hspace{-2mm} v_r \sum_{\Gamma\in\mathcal{C}_\zeta} w(|a^\prime_\Gamma,a_{-\Gamma}|_r)\\
&=  \sum_{e,x,o} \sum_{r \in \mathcal{R}_{(e,x,o)}}\hspace{-2mm} v_r \sum_{\substack{0 \leq \alpha \leq e\\ 0\leq \beta \leq o \\ \alpha+\beta \leq \zeta}} \hspace{-2mm} {e\choose \alpha} {o \choose \beta}\binom{n\hs - \hs e\hs - \hs o}{\zeta \hs - \hs \alpha \hs - \hs \beta}w(e \hspace{-1mm} + \hspace{-1mm} x \hspace{-1mm} + \hspace{-1mm} \beta \hspace{-1mm} - \hspace{-1mm} \alpha)\\
&= \sum_{e,x,o} \theta(e,x,o) \sum_{\substack{0 \leq \alpha \leq e\\ 0\leq \beta \leq o \\ \alpha+\beta \leq \zeta}} \hspace{-2mm} {e\choose \alpha} {o \choose \beta}\binom{n\hs - \hs e\hs - \hs o}{\zeta \hs - \hs \alpha \hs - \hs \beta}w(e \hspace{-1mm} + \hspace{-1mm} x \hspace{-1mm} + \hspace{-1mm} \beta \hspace{-1mm} - \hspace{-1mm} \alpha)
\end{align*}
where the set of coalitions $\mathcal{C}_\zeta$ was partitioned according to the action profile of the agents in each coalition.
We let $\alpha$ denote the number of agents in $\Gamma$ that utilize resource $r$ only in joint action $a$ and $\beta$ the number of agents in $\Gamma$ that utilize $r$ only in joint action $a^\prime$.
By simple counting arguments, there are exactly ${e\choose \alpha} {o \choose \beta}\binom{n -  e -  o}{\zeta  -  \alpha  -  \beta}$ coalitions grouped with the same $\alpha$ and $\beta$.
This decomposition is possible as the number of agents utilizing resource $r$ after a group $\Gamma$ deviates is precisely $e+x+\beta-\alpha$.

The smoothness constraint \eqref{eq:smooth_def} is satisfied only if
\begin{multline*}
\frac{1}{\binom{n}{\zeta}}\sum_{e,x,o} \theta(e,x,o) \sum_{\substack{0 \leq \alpha \leq e\\ 0\leq \beta \leq o \\ \alpha+\beta \leq \zeta}} \hspace{-2mm} {e\choose \alpha} {o \choose \beta}\binom{n\hs - \hs e\hs - \hs o}{\zeta \hs - \hs \alpha \hs - \hs \beta}w(e \hspace{-1mm} + \hspace{-1mm} x \hspace{-1mm} + \hspace{-1mm} \beta \hspace{-1mm} - \hspace{-1mm} \alpha)\\
 \geq \lambda_\zeta\sum_{e,x,o}\theta(e,x,o)w(o+x) -\mu_\zeta \sum_{e,x,o}\theta(e,x,o)w(e+x).
\end{multline*}
As $\theta(e,x,o) \geq 0$ for all $(e,x,o) \in \mathcal{I}$, it is sufficient to satisfy
\begin{multline}\label{eq:param_constraint}
\frac{1}{\binom{n}{\zeta}} \sum_{\substack{0 \leq \alpha \leq e\\ 0\leq \beta \leq o \\ \alpha+\beta \leq \zeta}} \hspace{-2mm} {e\choose \alpha} {o \choose \beta}\binom{n\hs - \hs e\hs - \hs o}{\zeta \hs - \hs \alpha \hs - \hs \beta}w(e \hspace{-1mm} + \hspace{-1mm} x \hspace{-1mm} + \hspace{-1mm} \beta \hspace{-1mm} - \hspace{-1mm} \alpha)\\
 \geq \lambda_\zeta w(o+x) -\mu_\zeta w(e+x),\quad \forall (e,x,o) \in \mathcal{I}.
\end{multline}
Observe that \eqref{eq:param_constraint} is independent of $a$, $a^\prime$, and $G$.
As such, this set of constraints serves as a sufficient condition that any $G \in \gee_n$ satisfies \eqref{eq:smooth_def} for all respective $a,a^\prime \in \mathcal{A}$.

To find parameters $\lambda_\zeta$ and $\mu_\zeta$ that provide the best \kstrong price of anarchy guarantee, we formulate the following optimization problem:
\begin{align}
\max_{\lambda_\zeta,\mu_\zeta \geq 0} \quad & \frac{\lambda_\zeta}{1+\mu_\zeta}\tag{P1$\zeta$}\label{opt:smooth_comp}\\
{\rm s.t.} \quad &~~ \eqref{eq:param_constraint}\nonumber
\end{align}
We restrict $\lambda_\zeta$ to be non-negative, though this constraint is not active except in degenerate cases.
Finally, we transform \eqref{opt:smooth_comp} by substituting new decision variables $\rho = (1+\mu_\zeta)/\lambda_\zeta$ and $\nu = 1/\left(\binom{n}{\zeta}\lambda_\zeta\right) \geq 0$.
The new objective becomes $1/\rho$.
Note that the constraint $(e,x,o)=(1,0,0)$ implies $\rho \geq 0$; we can thus invert the objective and change the minimization to a maximization, giving \eqref{opt:smooth}.
\hfill\qed

\begin{figure*}
\begin{align}
P^\star = &\min_{\rho,\{\nu_\zeta \geq 0\}_{\zeta \in [k]}} && \rho \nonumber\\
&~~~~~~{\rm s.t.} && \hspace{0pt}0 \geq w(o\hs +\hs x)-\rho w(e\hs +\hs x) + \sum_{\zeta \in [k]} \nu_\zeta \hs  \left( \hs \binom{n}{\zeta} w(e\hs +\hs x) -\hs  \hs \sum_{\substack{0\leq\alpha\leq e\\ 0\leq\beta\leq o\\ \alpha+\beta \leq \zeta }} \hs {e\choose \alpha} {o \choose \beta}\binom{n\hs - \hs e\hs - \hs o}{\zeta \hs - \hs \alpha \hs - \hs \beta} w (e\hs +\hs x\hs +\hs \beta\hs -\hs \alpha)  \hs \right) \nonumber\\
& && \hspace{4.2in}\forall (e,x,o) \in \mathcal{I}\tag{P$[k]$}\label{opt:dual}
\end{align}
\vspace{-8mm}
\end{figure*}

The smoothness parameters found via Proposition~\ref{prop:smoothLP} can be used with \cref{prop:smooth} to generate lower bounds on the \kstrong price of anarchy.
However, these bounds need not be tight, i.e., there may be no system in the class $\gee_n\times\{w\}$ that attains this inefficiency, and better bounds may be possible.
To study what efficiency we can guarantee across a class of resource allocation problems, we define the \kstrong price of anarchy bound for $(\gee_n,w)$ as
\begin{equation}\label{eq:spoa_bound}
    \spoa_k(\gee_n,w) = \min_{G \in \gee_n} \spoa_k(G,w).
\end{equation}
This performance ratio is parameterized by our choice of welfare function $w$ and the size of collaborative coalitions $k$.
In \cref{thm:spoa}, we provide a linear program whose value provides an exact value of $\spoa_k(\gee_n,w)$.
We do this by showing that the constraints of the $k$ linear programs in Proposition~\ref{prop:smoothLP} can be combined to give an exact quantification of the \kstrong price of anarchy bound.

\begin{theorem}\label{thm:spoa}
For the class of resource allocation problems $\gee_n$ with welfare function $w$, when groups maximize the common interest welfare, then
\begin{equation}
\spoa_k(\gee_n,w) = 1/P^\star(n,w,k),
\end{equation}
where $P^\star(n,w,k)$ is the solution to \eqref{opt:dual}.
\end{theorem}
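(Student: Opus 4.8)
The plan is to prove the two inequalities $\spoa_k(\gee_n,w)\ge 1/P^\star(n,w,k)$ and $\spoa_k(\gee_n,w)\le 1/P^\star(n,w,k)$ separately: the first by a combined-smoothness argument that reuses the parameterization of \cref{prop:smoothLP}, and the second by an explicit worst-case instance extracted from the linear-programming dual of \eqref{opt:dual}.

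For the lower bound, fix an arbitrary $G\in\gee_n$, a $k$-strong Nash equilibrium $\KSNash a$, and an optimal $\Opt a$, and attach to each resource $r$ its label $(e_r,x_r,o_r)$ and the aggregate weights $\theta(e,x,o)=\sum_{r\in\mathcal R_{(e,x,o)}}v_r$ exactly as in the proof of \cref{prop:smoothLP}, now with $a=\KSNash a$ and $a'=\Opt a$; this rewrites $W(\KSNash a)$, $W(\Opt a)$, and each $\sum_{\Gamma\in\mathcal C_\zeta}W(\Opt a_\Gamma,\KSNash a_{-\Gamma})$ as the $\theta$-linear expressions of that proof. Taking an optimal multiplier vector $(\rho,\{\nu_\zeta^\star\}_{\zeta\in[k]})$ of \eqref{opt:dual}, I would (i) multiply the equilibrium inequality $W(\KSNash a)\ge W(\Opt a_\Gamma,\KSNash a_{-\Gamma})$ by $\nu_\zeta^\star$, sum it over $\Gamma\in\mathcal C_\zeta$ and over $\zeta\in[k]$, and (ii) add the $\theta$-weighted primal constraints of \eqref{opt:dual}; the coalitional-deviation terms appearing in (ii) are controlled by the equilibrium inequalities of (i), leaving $\rho\,W(\KSNash a)\ge W(\Opt a)$ with $\rho=P^\star$, i.e.\ $\spoa_k(G,w)\ge 1/P^\star$. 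Since $G$ was arbitrary, $\spoa_k(\gee_n,w)\ge 1/P^\star$. The one conceptual point is that \eqref{opt:dual} carries a single $\rho$ (one copy of the estimate $W(\Opt a)\le\rho W(\KSNash a)$) while aggregating coalitional contributions of all sizes with the weights $\nu_\zeta$, which is precisely why merging the $k$ programs of \cref{prop:smoothLP} is no weaker than any one of them.

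For the matching upper bound, I would write down the LP dual of \eqref{opt:dual}: its variables are nonnegative weights $\theta(e,x,o)$, one per triple in $\mathcal I$; the row dual to $\rho$ reads $\sum\theta(e,x,o)\,w(e+x)=1$, the row dual to each $\nu_\zeta$ reads $\binom n\zeta\sum\theta(e,x,o)\,w(e+x)\ge\sum\theta(e,x,o)\sum_{\alpha,\beta}\binom e\alpha\binom o\beta\binom{n-e-o}{\zeta-\alpha-\beta}w(e+x+\beta-\alpha)$, and the objective is $\max\sum\theta(e,x,o)\,w(o+x)$. After checking that \eqref{opt:dual} is feasible (any feasible point of a single \eqref{opt:smooth} works with the remaining multipliers set to zero) and bounded (the $(1,0,0)$ constraint forces $\rho\ge 0$, as already observed for \eqref{opt:smooth}), strong duality gives an optimal $\theta^\star$ of value $P^\star$. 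From $\theta^\star$ I construct a game on $n$ agents: for each triple $(e,x,o)$ with $\theta^\star(e,x,o)>0$, create one resource for every ordered triple of pairwise-disjoint subsets $E,X,O\subseteq N$ with $|E|=e$, $|X|=x$, $|O|=o$, all of equal value so that these resources together carry value $\theta^\star(e,x,o)$; give agent $i$ exactly two actions, the Nash action $a_i$ comprising every resource whose $E\cup X$ contains $i$, and the optimal action $a_i'$ comprising every resource whose $X\cup O$ contains $i$. Then $W(a)=\sum\theta^\star(e,x,o)w(e+x)=1$ and $W(a')=\sum\theta^\star(e,x,o)w(o+x)=P^\star$, so everything reduces to showing $a\in\KSNE(G,w)$.

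That verification is the crux, and the step I expect to be the main obstacle. Because each agent has only two actions, any deviation of a coalition of size at most $k$ is some set $S$ with $|S|=\zeta\le k$ switching from $a$ to $a'$; on a resource of label $(e,x,o)$ realized by $(E,X,O)$, the post-deviation load is $\big(|E\cup X|-|(E\cup X)\cap S|\big)+|(X\cup O)\cap S|=e+x+\beta-\alpha$, where $\alpha=|E\cap S|$, $\beta=|O\cap S|$, the overlap $|X\cap S|$ cancels, and $0\le\alpha$, $0\le\beta$, $\alpha+\beta\le\zeta$. A double-counting identity — summing the indicator $\mathbf 1[\,|E\cap S|=\alpha,\ |O\cap S|=\beta\,]$ over all size-$\zeta$ sets $S$ and over all resources of a fixed label in the two orders — shows that, across the resources of label $(e,x,o)$, the number with prescribed $(\alpha,\beta)$ is proportional to the coalition count $\binom e\alpha\binom o\beta\binom{n-e-o}{\zeta-\alpha-\beta}$ appearing in \eqref{opt:dual}. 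Hence the welfare change of the deviation equals $\tfrac{1}{\binom n\zeta}\sum\theta^\star(e,x,o)\big(\sum_{\alpha,\beta}\binom e\alpha\binom o\beta\binom{n-e-o}{\zeta-\alpha-\beta}w(e+x+\beta-\alpha)-\binom n\zeta w(e+x)\big)$, which is $\le 0$ precisely by the $\nu_\zeta$-row of the dual; so no coalition of size $\le k$ can profit, $a$ is a $k$-strong Nash equilibrium, and $\spoa_k(G,w)\le W(a)/W(a')=1/P^\star$. Combined with the lower bound, this yields $\spoa_k(\gee_n,w)=1/P^\star$. The only edge case, $\sum\theta^\star(e,x,o)w(e+x)=0$, arises only when the guaranteed ratio is degenerate and is absorbed by the convention $0/0=1$.
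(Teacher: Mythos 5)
Your proposal is correct and follows essentially the same route as the paper: the $(e,x,o)$ parameterization of \cref{prop:smoothLP}, linear-programming duality between the $\theta$-program (which is exactly the paper's \eqref{opt:primal}) and \eqref{opt:dual}, and an explicit permutation-symmetric worst-case instance certifying tightness. The two places where you deviate are presentational rather than substantive. First, you obtain the lower bound $\spoa_k(\gee_n,w)\geq 1/P^\star$ by weighting the constraints of \eqref{opt:dual} with $\theta$ and absorbing the coalitional terms via the equilibrium inequalities--this is weak duality carried out by hand--whereas the paper reaches the same conclusion through the relaxation chain from the worst-case ratio to \eqref{opt:primal} and then invokes strong duality for both directions at once; your version has the small advantage of making the ``single $\rho$, many $\nu_\zeta$'' structure of the merged program transparent. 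Second, your tight instance places one resource on every ordered triple of disjoint subsets $(E,X,O)$ with $|E|=e$, $|X|=x$, $|O|=o$, while the paper builds $n!$ rings of $n$ resources per label indexed by permutations of the players; both are symmetrizations whose only job is to make the per-coalition deviation count come out proportional to $\binom{e}{\alpha}\binom{o}{\beta}\binom{n-e-o}{\zeta-\alpha-\beta}$, and your double-counting argument for that identity is sound (as is the reduction, valid because each agent has only two actions, of an arbitrary coalition deviation to a subset $S$ switching wholesale from $a$ to $a'$). The remaining housekeeping--feasibility and boundedness of the combined program so that an optimal $\theta^\star$ exists--is handled slightly differently (you argue on \eqref{opt:dual}, the paper on \eqref{opt:primal}), but either suffices for LP strong duality, and your noted edge case cannot actually occur because the normalization row forces $\sum_{e,x,o}\theta^\star(e,x,o)w(e+x)=1$.
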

The proof appears in the appendix.

In \cref{fig:spoa_plot_4}, we consider four welfare functions and plot the tight bounds on the \kstrong price of anarchy for $1 \leq k \leq n$.
As expected, we observe that increased communication improves efficiency guarantees; the amount of this increase is useful in determining the benefits of inter-agent communication/collaboration.
However, this collaboration comes at a cost; in \cref{sec:dyn}, we will study the complexity of distributed dynamics reaching \kstrong Nash equilibria.

\section{Coalitional Dynamics}\label{sec:dyn}
\cref{sec:quant} provided several tools for quantifying the efficiency guarantees of \kstrong Nash equilibria.
In this section, we will study the qualities of group-based dynamics that reach these equilibria.
In particular, we will discuss the convergence rate and transient performance when agents follow the Coalitional round-robin and Asynchronous Best Response, respectively.
We will denote $a^t$ as the joint action occurring at time $t \in \mathbb N$ and $\Gamma^t\subseteq N$ as the group of agents updating their action at time $t$.

\subsection{Round Robin}\label{subsec:rr}
We first consider the $k$-coalitional round robin agent dynamics, in which each group of $k$ agents updates their actions sequentially, following a set order $\sigma \in \Sigma_{\binom{n}{k}}$, where $\sigma(z)$ for $z \in \{1,\ldots,\binom{n}{k}\}$ is the index of a group $\Gamma \in \mathcal{C}_{[k]}$.
We will call a \emph{round} one pass through $\sigma$ in which each group updates their action.
At their turn, the group $\Gamma^t$ selects their best response to the current action, i.e., $a_{\Gamma^t}^{t+1} \in \argmax_{a_{\Gamma^t} \in \mathcal{A}_{\Gamma^t}}W(a_{\Gamma^t},a_{-\Gamma^t})$, where ties are broken uniformly at random unless $a_{\Gamma^t}^{t} \in \argmax_{a_{\Gamma^t} \in \mathcal{A}_{\Gamma^t}}W(a_{\Gamma^t},a_{-\Gamma^t})$, in which case the group selects their current action $a_{\Gamma^t}^{t+1} = a_{\Gamma^t}^{t}$.
The dynamics are more formally described in \cref{alg:krr}.

\begin{algorithm}
\caption{$k$-Round-Robin Dynamics}\label{alg:krr}
\begin{algorithmic}
\Procedure{$k$RoundRobin}{$W,\mathcal{A},N,\sigma,a$}
	\State $\overline{a} \gets$ \texttt{NULL}
	\While{$\overline{a} \neq a$}
		\State $\overline{a} \gets a$
		\For{$z \in \{1,\ldots,\binom{n}{k}\}$}
			\State $\Gamma \gets \mathcal{C}_{k}(\sigma(z))$ \Comment{Get group}
			\For{$a_\Gamma^+ \in \mathcal{A}_\Gamma\setminus a_\Gamma$} \Comment{Group deviations}
				\If{$W(a_\Gamma^+,a_{-\Gamma}) > W(a)$}
					\State $a \gets (a_\Gamma^+,a_{-\Gamma})$
				\EndIf
			\EndFor
		\EndFor
	\EndWhile
\EndProcedure
\end{algorithmic}
\end{algorithm}

These dynamics are synchronous (in that agents must follow a set order) but provide an understanding of how groups of agents can make decisions in a localized manner, and we can analyze the equilibrium hitting time.
In the fully distributed setting ($k=1$), it has been shown that these dynamics reach a Nash equilibrium in finite time and require $\mathcal{O}(m^n)$ welfare evaluations~\cite{durand2016ComplexityOptimalityBest}.
In Proposition~\ref{prop:round_robin}, we find that in the coalitional settings, we maintain the finite convergence time and incur a small base exponential gain in the number of welfare comparisons required.
Recent work has shown that the examples that realize these worst-case hitting times are fragile and that equilibria can be computed in polynomial-time under smoothed running-time analysis~\cite{giannakopoulosSmoothedFPTASEquilibria2023}.
As a first step, we consider the worst-case run time, but the authors believe that similar findings on the added complexity of group decision-making will hold under smoothed running-time analysis, though this is the subject of ongoing work.
Recall $n=|N|$ and $m := \max_{i \in N} |\mathcal{A}_i|$.

\begin{proposition}\label{prop:round_robin}
The $k$-Coalitional-Round-Robin dynamics converge in finite time and requires $\mathcal{O}\left(m^n\left(\frac{1}{1-\sfrac{1}{m}}\right)^k\right)$ welfare evaluations.
\end{proposition}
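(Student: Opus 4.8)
The plan is to handle the two assertions in turn, the first routine and the second carrying the real work.

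\emph{Finite termination.} The key observation is that $W$ acts as an ordinal potential along the run: by the tie-breaking rule a coalition switches only to a strictly higher-welfare profile and otherwise retains its action, so $W$ is nondecreasing along the sequence of profiles visited. Within one round of \cref{alg:krr} one of two things happens: either no coalition moves, in which case $\overline a=a$ and the \textbf{while} loop halts; or at least one coalition moves, in which case $W$ is strictly larger at the end of the round than at its start. Since $\mathcal A$ is finite, $W$ takes finitely many values, so only finitely many rounds can be of the second kind; hence the procedure halts after finitely many rounds and, a fortiori, finitely many welfare evaluations. At a halting profile $\overline a=a$ no size-$k$ coalition has a strictly improving deviation, and because each $\mathcal A_\Gamma$ is a product set, an improving deviation of any coalition of size at most $k$ extends (by holding the extra members' actions fixed) to an improving deviation of a size-$k$ coalition; thus the halting profile is a $k$-strong Nash equilibrium, which incidentally re-derives \cref{prop:exist}.

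\emph{Evaluation count.} Each round performs $\binom nk$ coalition best-response computations, and finding the best response of a coalition $\Gamma$ costs at most $|\mathcal A_\Gamma|-1\le m^{k}-1$ welfare evaluations, one per alternative in $\mathcal A_\Gamma\setminus\{a_\Gamma\}$. Multiplying the crude bound on the number of rounds, $\le|\mathcal A|\le m^{n}$, by this per-round cost is too lossy, overshooting the target by a $\binom nk$ factor, so an amortized accounting is required. The plan is to adapt the analysis of the fully distributed case in~\cite{durand2016ComplexityOptimalityBest}: split the welfare evaluations into \emph{productive} ones, which reveal a strict improvement and trigger a move, and \emph{unproductive} ones. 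Productive evaluations are at most the number of distinct welfare values visited, hence at most $|\mathcal A|\le m^{n}$. For the unproductive ones, one charges the repeated re-scanning of coalitions against the strict increases of $W$ rather than counting a full pass in every round; the overhead of a size-$k$ coalitional best response over a unilateral one then works out to a $k$-fold product of per-coordinate geometric factors $|\mathcal A_i|/(|\mathcal A_i|-1)\le(1-1/m)^{-1}$. Collecting these $k$ factors together with the base $m^{n}$ yields $\mathcal O\!\big(m^{n}(1-1/m)^{-k}\big)$.

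The finite-termination half is immediate once $W$ is recognized as an ordinal potential; the crux — where the effort goes — is the amortized bound on the unproductive evaluations, i.e.\ showing that the $\binom nk$ coalition scans performed in each round do not each cost on the order of $m^{n}$ but instead telescope through the coalition structure, so that only the $(1-1/m)^{-k}$ overhead over the $k=1$ baseline of~\cite{durand2016ComplexityOptimalityBest} survives.
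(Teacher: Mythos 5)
Your finite-termination argument is sound---$W$ strictly increases at every coalition move, so only finitely many improving rounds can occur---and your observation that any improving deviation of a coalition of size less than $k$ extends to one of size exactly $k$ matches the paper's verification that the halting profile is a $k$-strong Nash equilibrium. But the second half, which you yourself flag as where the effort goes, is not a proof: the amortized productive/unproductive accounting adapted from the $k=1$ analysis of Durand and Gaujal is only a plan, and the decisive claim---that the cost of the repeated coalition re-scans ``works out to'' a $k$-fold product of factors $(1-1/m)^{-1}$---is asserted rather than established. Nothing in the proposal shows how the $\binom{n}{k}$ coalition scans per round telescope away; as written, your count still overshoots the stated bound by a factor $\binom{n}{k}$, and closing that factor is precisely the content of the proposition.

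The paper closes it differently, and more directly, by sharpening the bound on the number of rounds instead of amortizing within rounds. Pad every agent to $m$ actions with dummy worst-welfare actions. In any round that does not start at a $k$-strong Nash equilibrium, when a group $\Gamma$ best responds to $a_{-\Gamma}$, every rejected group action $a_\Gamma$ eliminates the joint action $(a_\Gamma,a_{-\Gamma})$ as a possible output of \cref{alg:krr}, and no group responds to the same complementary action in two consecutive rounds unless the profile is already an equilibrium; accounting for overlaps between groups, at least $y=\sum_{\zeta=1}^{k}\binom{n}{\zeta}(m-1)^{\zeta}$ joint actions are eliminated in each such round. Hence there are at most $m^n/y+1$ non-terminal rounds, each costing exactly $\binom{n}{k}m^{k}$ welfare checks, and since $y=\Theta\bigl(\binom{n}{k}(m-1)^{k}\bigr)$ the $\binom{n}{k}$ cancels, leaving $\mathcal{O}\bigl(m^{n}\bigl(\frac{m}{m-1}\bigr)^{k}\bigr)$, i.e., the stated bound. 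If you wish to keep your amortization route you must actually construct and verify the charging scheme; otherwise, the elimination count above is the missing ingredient your argument needs.
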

\begin{proof}
First, we verify that the output of \cref{alg:krr} is a \kstrong Nash equilibrium, then we consider how long it takes \cref{alg:krr} to converge.
\cref{alg:krr} terminates after a round in which no group $\Gamma \in \mathcal{C}_{k}$ can select a new action in which the welfare increases, i.e., $W(a) \geq W(a_\Gamma,a_{-\Gamma})$ for all $a_\Gamma \in \mathcal{A}_\Gamma$ and $\Gamma \in \mathcal{C}_{k}$ where $a$ is the output of \cref{alg:krr}.
A deviation for a any subgroup $\Gamma^\prime \in \mathcal{C}_{[k]}$ is subsumed by the joint action $(a_{\Gamma^\prime},a_{\Gamma\setminus \Gamma^\prime}) \in \mathcal{A}_\Gamma$.
As such, a state $a$ terminates \cref{alg:krr} if and only if it satisfies \eqref{eq:ksnash_def} and is a \kstrong Nash equilibrium.

Without loss of generality, we assume each agent possesses $m$ actions; for each agent, $i$ that has fewer actions, assign $m-|\mathcal{A}_i|$ dummy actions with minimum welfare.
In one round of the $k$-Round-Robin dynamics, each group of agents is given the opportunity to deviate their action.
First, we note that no group $\Gamma$ will respond to the same complimentary group action $a_{-\Gamma}$ in two consecutive rounds unless $a$ is a \kstrong Nash equilibrium.
If the group $\Gamma$ rejects a group action $a_{\Gamma}$ in response to $a_{-\Gamma}$, the joint action $(a_\Gamma,a_{-\Gamma})$ is eliminated from consideration as an output of \cref{alg:krr}.
Accounting for overlaps between the groups, in any round that does not start in a \kstrong Nash equilibrium, at least
$y = \sum_{\zeta=1}^k \binom{n}{\zeta} (m-1)^\zeta,$
joint actions are eliminated as possible outputs of \cref{alg:krr}.
As there are $m^n$ joint actions in total, there can be at most $r \leq \lfloor \frac{m^n}{y} \rfloor +1$ rounds that do not start in a \kstrong Nash equilibrium; this proves the finite convergence time.
In each round, there are exactly $\binom{n}{k}m^k$ welfare checks; thus, the total number of welfare checks is no more than $(\frac{m^n}{y}+1)\binom{n}{k}m^k$.
Removing lower order terms from $y$ gives the stated bound.
\end{proof}

From Proposition~\ref{prop:round_robin}, we observe two things: 1) the coalitional dynamics do not require drastically more welfare evaluations than the fully distributed round-robin, but 2) the convergence rate is slow regardless of $k$.
In light of this, we turn our focus to understanding the transient performance of collaborative decision-making dynamics.
Further, in many settings, it is desirable to allow agents or groups to update their actions asynchronously.
In \cref{subsec:abr}, we will consider both of these factors in the asynchronous best response dynamics.

\subsection{Asynchronous Best-Response Dynamics}\label{subsec:abr}
Motivated by settings where agents (or groups of agents) perform action revisions asynchronously or on their own time scales, we consider a dynamical system where the next group of agents to update is random.

We define the Asynchronous $k$-Coalitional Best-Response Dynamics as follows: let $t \geq 0$ denote the number of agent (or group) updates that have yet occurred\footnote{Counting time steps in terms of the number of updates subsumes cases where agents (or groups) update with respect to individual and independent random clocks. The rate of each clock is analogous to the selection probability for different groups.}.
The updating group $\Gamma^t$ is selected at random, such that the size of the group $\zeta$ is picked with probability $p_\zeta = \mathbb{E}[|\Gamma^t|=\zeta]$ and the specific agents in the group are drawn uniformly at random.
Once formed, the updating group, $\Gamma^t$, chooses their best response in the same manner as the coalitional round robin described in \cref{subsec:rr}.

From their distributed decision-making and asynchronicity, these dynamics capture the behavior of real-time multi-agent systems components. 
In \cref{thm:abr}, we show these dynamics converge almost surely to a \kstrong Nash equilibrium, and further, if the system is \lmksmooth, we provide a bound on the cumulative welfare relative to the optimal.

\begin{theorem}\label{thm:abr}
The Asynchronous $k$-Coalitional Best-Response Dynamics converge almost surely to the set of \kstrong Nash equilibrium.
Further, if $(G,W)$ is a \lmksmooth~system, then after $T \geq 1$ update steps, the cumulative expected welfare satisfies
\begin{equation}\label{eq:br_bound}
\mathbb{E}\left[ \frac{1}{T} \sum_{t=1}^T  W(a^t)\right] \geq \frac{T-1}{2T} \frac{\sum_{\zeta=1}^k p_\zeta \lambda_\zeta}{1+ \sum_{\zeta=1}^k p_\zeta \mu_\zeta} W(\Opt{a}),
\end{equation}
where $p_\zeta$ is the probability a group of size $\zeta$ best responds.
\end{theorem}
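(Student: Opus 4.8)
I would establish the two claims in turn. For convergence, the starting point is that $W(a^t)$ is non-decreasing along the dynamics: the updating coalition $\Gamma^t$ plays a best response to $a^{t-1}_{-\Gamma^t}$, so $W(a^t)=\max_{a'_{\Gamma^t}\in\mathcal{A}_{\Gamma^t}}W(a'_{\Gamma^t},a^{t-1}_{-\Gamma^t})\ge W(a^{t-1})$, and by the tie-breaking rule a coalition stays put whenever its current action is already a best response, so $a^t\neq a^{t-1}$ forces $W(a^t)>W(a^{t-1})$. Since $\mathcal{A}$ is finite, $W$ takes finitely many values, hence on every sample path the state changes only finitely often and $a^t$ converges to some limit $a^\infty$; it remains to show $a^\infty\in\KSNE(G,W)$ almost surely. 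If $a^\infty$ were not a \kstrong Nash equilibrium, some $\Gamma\in\mathcal{C}_{[k]}$ would have a strictly improving deviation; extending $\Gamma$ to a coalition $\Gamma'$ of size exactly $k$ and holding $\Gamma'\setminus\Gamma$ fixed, the same deviation is strictly improving for $\Gamma'$, so some size-$k$ coalition can improve at $a^\infty$. Each fixed size-$k$ coalition is selected at every step with the fixed positive probability $p_k/\binom{n}{k}$ (assuming $p_k>0$), hence infinitely often almost surely by Borel--Cantelli; intersecting over the finitely many size-$k$ coalitions, almost surely every such coalition is selected at some step after the (finite) time of the last state change, at which point it best responds without moving the state --- contradicting that it could improve. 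Thus $a^\infty$ admits no improving size-$k$ deviation, and by the same subsumption no improving $\mathcal{C}_{[k]}$ deviation, i.e., $a^\infty\in\KSNE(G,W)$.

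For the transient bound, assume $(G,W)$ is \lmksmooth~and write $w_t:=\mathbb{E}[W(a^t)]$, $\bar\lambda:=\sum_{\zeta=1}^k p_\zeta\lambda_\zeta$, $\bar\mu:=\sum_{\zeta=1}^k p_\zeta\mu_\zeta$. The key step is a one-step estimate: conditioning on $a^{t-1}$ and on $|\Gamma^t|=\zeta$ (in which case $\Gamma^t$ is uniform on $\mathcal{C}_\zeta$), a best-responding coalition does at least as well as playing $\Opt{a}_\Gamma$, so
\begin{align*}
\mathbb{E}\!\left[W(a^t)\mid a^{t-1},\,|\Gamma^t|=\zeta\right]
&=\frac{1}{\binom{n}{\zeta}}\sum_{\Gamma\in\mathcal{C}_\zeta}\max_{a'_\Gamma\in\mathcal{A}_\Gamma}W(a'_\Gamma,a^{t-1}_{-\Gamma})\\
&\ge\frac{1}{\binom{n}{\zeta}}\sum_{\Gamma\in\mathcal{C}_\zeta}W(\Opt{a}_\Gamma,a^{t-1}_{-\Gamma})
\ge\lambda_\zeta W(\Opt{a})-\mu_\zeta W(a^{t-1}),
\end{align*}
the last inequality being \cref{def:smooth} with $a'=\Opt{a}$. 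Averaging over $\zeta$ with weights $p_\zeta$ and taking total expectation gives the recursion $w_t\ge\bar\lambda\,W(\Opt{a})-\bar\mu\,w_{t-1}$ for all $t\ge1$. Combining this with the monotonicity $w_{t-1}\le w_t$ and summing over $t=1,\dots,T$ --- discarding the uncontrolled initial term $W(a^0)$ --- yields $\frac{1}{T}\sum_{t=1}^T w_t\ge\frac{T-1}{2T}\cdot\frac{\bar\lambda}{1+\bar\mu}\,W(\Opt{a})$, which is the claim. (In fact the recursion together with $w_{t-1}\le w_t$ already forces $w_t\ge\frac{\bar\lambda}{1+\bar\mu}W(\Opt{a})$ for each $t\ge1$; the $\frac{T-1}{2T}$ prefactor follows even from a cruder summation that sacrifices the first few steps.)

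I expect the main obstacle to be the measure-theoretic care in the convergence proof: coupling the pathwise fact that the state stabilizes after finitely many changes with the Borel--Cantelli conclusion about selections occurring after that random stabilization time, and deciding exactly which hypothesis on the selection probabilities to impose --- $p_k>0$ suffices, and the extension/subsumption step is precisely what lets size-$k$ selections certify the full \kstrong equilibrium condition even when smaller coalition sizes are never chosen. The transient bound is then routine once the per-step smoothness estimate above is in place, the only delicate point being the summation bookkeeping that produces the $\frac{T-1}{2T}$ factor.
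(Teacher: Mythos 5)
Your proposal is correct, and its core estimate is the same as the paper's: condition on the group size and the realized coalition, use the fact that a best response does at least as well as deviating to $\Opt{a}_\Gamma$, and then invoke the \lmksmooth{} inequality to obtain $\mathbb{E}[W(a^{t})\mid a^{t-1}]\geq \bar\lambda W(\Opt{a})-\bar\mu W(a^{t-1})$ with $\bar\lambda=\sum_\zeta p_\zeta\lambda_\zeta$, $\bar\mu=\sum_\zeta p_\zeta\mu_\zeta$. Where you diverge is instructive on both ends. For convergence, the paper phrases the argument as absorption of a finite Markov chain (non-equilibrium states have outgoing improvement edges selected with positive probability, equilibria are absorbing, and there are no cycles because welfare strictly increases along edges); your pathwise stabilization plus Borel--Cantelli argument is the same idea in different clothing, but you make explicit a hypothesis the paper leaves implicit: some positivity of the selection probabilities is needed, and your subsumption step shows $p_k>0$ alone suffices, whereas the paper's wording tacitly assumes the relevant $p_{|\Gamma|}$ is positive. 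For the transient bound, the paper rearranges the recursion into the dichotomy ``either $\mathbb{E}[W(a^t)]$ already meets $\frac{\bar\lambda}{1+\bar\mu}W(\Opt{a})$ or $\mathbb{E}[W(a^{t+1})]$ does,'' and the ``every other update'' bookkeeping is what produces the $\frac{T-1}{2T}$ prefactor. You instead combine the recursion with the pathwise (hence in-expectation) monotonicity $W(a^{t-1})\leq W(a^t)$, which immediately gives $\mathbb{E}[W(a^t)]\geq \frac{\bar\lambda}{1+\bar\mu}W(\Opt{a})$ for every $t\geq 1$; this is a cleaner and strictly stronger conclusion from which the stated bound follows a fortiori, and it shows the $\frac{T-1}{2T}$ factor in \eqref{eq:br_bound} is an artifact of the paper's cruder summation rather than a necessary loss.
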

Interestingly, the bound on the average transient welfare depends on how frequently groups of different sizes are sampled to perform their best response.
When the agents are designed to more regularly collaborate in larger groups, the transient guarantee will often be better.
\begin{proof}
First, we show that the Asynchronous $k$-Coalitional Best-Response Dynamics converges in general.
A group $\Gamma$ revises their action only to one of strictly higher payoff if one exists.
Consider the resulting Markov chain $\mathcal{M}$ with states $\mathcal{A}$.
Any state $a \in \mathcal{A} \setminus \KSNE$ has an outgoing edge with positive probability as there exists some group $\Gamma \in \mathcal{C}_{[k]}$ that is selected with probability $p_{|\Gamma|}/|\mathcal{C}_{|\Gamma|}| > 0$ which would revise their action.
Any state $a \in \KSNE$ has no outgoing edges with positive probability as no group $\Gamma\in \mathcal{C}_{[k]}$ can revise their action to strictly increase the welfare.
Finally, there are no cycles (excluding self-loops) in $\mathcal{M}$, as every outgoing edge is directed from a joint action of lower welfare to one of strictly higher welfare.
As such, the set $\KSNE$ is absorbing and $\mathbb{P}[\lim_{t \rightarrow \infty} a^t \in \KSNE] = 1$.

Now, consider that the system $(G,W)$ is \lmksmooth.
As the selection of the updating group is random, the welfare at time $t+1$ is a random variable, even when conditioned on $a^t$; the expectation of the succeeding welfare can be written
\begin{align*}
\mathbb{E}[W(a^{t+1})\mid & a^t=a] = \sum_{\zeta=1}^k p_\zeta \sum_{\Gamma \in \mathcal{C}_\zeta} \frac{1}{\binom{n}{\zeta}} W(a_\Gamma^+,a_{-\Gamma})\\
&\geq \sum_{\zeta=1}^k p_\zeta \sum_{\Gamma \in \mathcal{C}_\zeta} \frac{1}{\binom{n}{\zeta}} W(\Opt{a}_\Gamma,a_{-\Gamma})\\
&\geq \sum_{\zeta=1}^k p_\zeta\left( \lambda_\zeta W(\Opt{a}) - \mu_\zeta W(a) \right)\\
&= \left(\sum_{\zeta=1}^k p_\zeta  \lambda_\zeta\right)W(\Opt{a}) - \left(\sum_{\zeta=1}^k p_\zeta  \mu_\zeta\right)W(a),
\end{align*}
where $a_\Gamma^+\in \argmax_{a_{\Gamma} \in \mathcal{A}_{\Gamma}}W(a_\Gamma,a_{-\Gamma})$ is the update state for the group $\Gamma$ following the dynamics; the welfare for each possible updated joint action is the same, so determining which group action is selected is irrelevant. 
As $a_\Gamma^+$ is a best response, the welfare is no better for selecting a different action, namely $\Opt{a}_\Gamma$.
The final inequality holds from \eqref{eq:smooth_def}.
Taking the expectation of $\mathbb{E}[W(a^{t+1})\mid a^t=a]$ over $a^t$ gives
$$\mathbb{E}\left[ W(a^{t+1}) \right] \hs \geq \hs \left(\sum_{\zeta=1}^k p_\zeta  \lambda_\zeta\hspace{-2.5pt}\right) \hs W(\Opt{a}) - \left(\sum_{\zeta=1}^k p_\zeta  \mu_\zeta\hspace{-2.5pt}\right) \hs \mathbb{E}\left[W(a^t)\right]\hspace{-2.5pt}.$$
Rearranging terms shows
\begin{multline*}
\mathbb{E}\left[ W(a^{t+1}) \right] - \frac{\sum_{\zeta=1}^k p_\zeta \lambda_\zeta}{1+ \sum_{\zeta=1}^k p_\zeta \mu_\zeta} W(\Opt{a}) \\
\geq \left(\sum_{\zeta=1}^k p_\zeta \mu_\zeta\right)\hspace{-3pt}\left( \frac{\sum_{\zeta=1}^k p_\zeta \lambda_\zeta}{1+ \sum_{\zeta=1}^k p_\zeta \mu_\zeta} W(\Opt{a}) - \mathbb{E}\left[W(a^t)\right] \right)\hspace{-3.5pt}.
\end{multline*}
Observe that either $\mathbb{E}\left[W(a^t)\right] \geq  \frac{\sum_{\zeta=1}^k p_\zeta \lambda_\zeta}{1+ \sum_{\zeta=1}^k p_\zeta \mu_\zeta} W(\Opt{a})$ or $\mathbb{E}\left[ W(a^{t+1}) \right] - \frac{\sum_{\zeta=1}^k p_\zeta \lambda_\zeta}{1+ \sum_{\zeta=1}^k p_\zeta \mu_\zeta} W(\Opt{a}) \geq 0$.
Accordingly, in expectation, every other update must satisfy the bound, giving the average cumulative welfare bound in \eqref{eq:br_bound}.
\end{proof}

\cref{thm:abr} shows that the transient efficiency changes with the frequency with which different group sizes perform best responses.
To attain the best transient guarantee, we can select $p$ carefully.
\begin{corollary}\label{cor:abr_opt}
If a system $(G,w)$ is a resource allocation problem in $\gee_n\times\{w\}$, then selecting $p_\zeta \propto \frac{\nu_\zeta^\star}{\sum_{\psi\in [k]} \binom{n}{\psi} \nu_\psi^\star}$ for all $\zeta \in [k]$ gives 
$$\mathbb{E}\left[ \frac{1}{T} \sum_{t=1}^T  W(a^t)\right] \geq \frac{T-1}{2T} \spoa_k(\gee_n,w) W(\Opt{a}).$$
\end{corollary}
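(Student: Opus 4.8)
The plan is to re-run the proof of \cref{thm:abr}, but at the single point where the $k$ per-size smoothness inequalities of \cref{def:smooth} are invoked separately and then averaged, instead apply one inequality read directly off the combined program \eqref{opt:dual}. This tighter treatment of the $p$-weighted average is exactly what pushes the transient ratio up to $1/P^\star$; applying the individual programs \eqref{opt:smooth} one at a time and then averaging is generically strictly lossier, so the choice of $p$ here is essential rather than merely convenient.

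First I would fix an optimal solution $(P^\star,\{\nu_\zeta^\star\}_{\zeta\in[k]})$ of \eqref{opt:dual}, so $P^\star=P^\star(n,w,k)$ and, by \cref{thm:spoa}, $1/P^\star=\spoa_k(\gee_n,w)$. I would take $p$ as in the statement, i.e.\ $p_\zeta=\kappa\binom{n}{\zeta}\nu_\zeta^\star$ with normalizer $\kappa:=(\sum_{\psi\in[k]}\binom{n}{\psi}\nu_\psi^\star)^{-1}$, and check it is a legitimate distribution: nonnegativity is immediate from $\nu_\zeta^\star\ge 0$, and $\kappa$ is finite because the constraint of \eqref{opt:dual} at, e.g., $(e,x,o)=(0,0,1)$ forces $\sum_\zeta\nu_\zeta^\star\binom{n-1}{\zeta-1}\ge 1$ (using $w(0)=0<w(1)$), so the $\nu_\zeta^\star$ are not all zero. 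Almost-sure convergence to $\KSNE$ is inherited verbatim from \cref{thm:abr}, since that argument concerns only the induced Markov chain on $\mathcal{A}$ and does not use smoothness.

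For the transient bound I would copy the proof of \cref{thm:abr} up to $\mathbb{E}[W(a^{t+1})\mid a^t=a]\ge\sum_{\zeta=1}^k p_\zeta\tfrac{1}{\binom{n}{\zeta}}\sum_{\Gamma\in\mathcal{C}_\zeta}W(\Opt{a}_\Gamma,a_{-\Gamma})$ (best responses dominate deviating to $\Opt{a}_\Gamma$, and ties are welfare-irrelevant), then bring in the label parametrization from the proof of \cref{prop:smoothLP} applied to the pair $(a,\Opt{a})$. Writing $\theta(e,x,o)=\sum_{r\in\mathcal{R}_{(e,x,o)}}v_r\ge 0$, this yields $W(a)=\sum_{(e,x,o)}\theta(e,x,o)w(e+x)$, $W(\Opt{a})=\sum_{(e,x,o)}\theta(e,x,o)w(o+x)$, and $\sum_{\Gamma\in\mathcal{C}_\zeta}W(\Opt{a}_\Gamma,a_{-\Gamma})=\sum_{(e,x,o)}\theta(e,x,o)\sum_{\alpha,\beta}\binom{e}{\alpha}\binom{o}{\beta}\binom{n-e-o}{\zeta-\alpha-\beta}w(e+x+\beta-\alpha)$. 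Since $\theta\ge 0$ it suffices to bound, label by label, the quantity $\sum_\zeta\tfrac{p_\zeta}{\binom{n}{\zeta}}\sum_{\alpha,\beta}\binom{e}{\alpha}\binom{o}{\beta}\binom{n-e-o}{\zeta-\alpha-\beta}w(e+x+\beta-\alpha)$; by the choice of $p$ this is $\kappa$ times the expression $\sum_\zeta\nu_\zeta^\star\sum_{\alpha,\beta}(\cdots)$ appearing in the constraint of \eqref{opt:dual}, which that constraint lower-bounds by $w(o+x)+(1/\kappa-P^\star)w(e+x)$. Re-summing against $\theta$ gives $\mathbb{E}[W(a^{t+1})\mid a^t=a]\ge\kappa\,W(\Opt{a})-(\kappa P^\star-1)W(a)$, i.e.\ effective smoothness constants $\Lambda=\kappa$, $M=\kappa P^\star-1$ with $\Lambda/(1+M)=1/P^\star$. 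Substituting these into the telescoping/case-split argument that closes the proof of \cref{thm:abr} gives $\mathbb{E}\big[\tfrac{1}{T}\sum_{t=1}^T W(a^t)\big]\ge\tfrac{T-1}{2T}\cdot\tfrac{1}{P^\star}W(\Opt{a})=\tfrac{T-1}{2T}\spoa_k(\gee_n,w)W(\Opt{a})$, as claimed.

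The main obstacle is the middle computation: one must line up the combinatorial weights $\binom{e}{\alpha}\binom{o}{\beta}\binom{n-e-o}{\zeta-\alpha-\beta}$ and $\binom{n}{\zeta}$ produced by averaging group best responses with exactly the coefficients in \eqref{opt:dual}, and track the normalization of $p$ carefully enough that the coefficient of $W(\Opt{a})$ and that of $W(a)$ recombine to the clean ratio $1/P^\star$ — in particular that $\sum_\zeta\binom{n}{\zeta}\nu_\zeta^\star$ equals precisely $1/\kappa$. A secondary point worth making explicit is why \cref{thm:abr} cannot just be quoted with the parameters of \cref{prop:smoothLP}: the per-size inequalities \eqref{eq:param_constraint}, used individually and then averaged, yield only $\max_\zeta\lambda_\zeta/(1+\mu_\zeta)$-type ratios, and it is exactly the slack between that and the combined program that \cref{thm:spoa} eliminates and that the prescribed sampling law exploits.
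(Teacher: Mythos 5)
Your overall route is the intended one, and most of it is sound: you read the multipliers $(\nu_\zeta^\star,P^\star)$ off the combined program \eqref{opt:dual} (correctly noting the per-$\zeta$ programs \eqref{opt:smooth} cannot give $1/P^\star$), take $p_\zeta=\binom{n}{\zeta}\nu_\zeta^\star\big/\sum_\psi\binom{n}{\psi}\nu_\psi^\star$ (note this inserts a $\binom{n}{\zeta}$ that the statement's literal ``$\propto$'' does not contain, but it is the only choice for which the algebra closes), and aggregate the label-indexed constraints of \eqref{opt:dual} against $\theta\geq 0$ to get the drift inequality $\mathbb{E}[W(a^{t+1})\mid a^t=a]\geq \kappa W(\Opt{a})-(\kappa P^\star-1)W(a)$ with $\kappa=1/\sum_\psi\binom{n}{\psi}\nu_\psi^\star$. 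The gap is the final sentence. The case-split that closes the proof of \cref{thm:abr} multiplies the deficit by the effective coefficient $M$, so it is only valid when $M\geq 0$; in \cref{thm:abr} this is automatic because \cref{def:smooth} requires $\mu\geq 0$, but here $M=\kappa P^\star-1\geq 0$ means $\sum_\zeta\binom{n}{\zeta}\nu_\zeta^\star\leq P^\star$, and nothing in \eqref{opt:dual} enforces this. It genuinely fails: for the additive rule $w(j)=j$ with $k=1$, the constraint of \eqref{opt:dual} reduces to $(o+x)-\rho(e+x)+\nu_1(e-o)\leq 0$, whose optimum is $\rho^\star=\nu_1^\star=1$, giving $M=\tfrac1n-1<0$.

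Worse, in that regime the inequality you are trying to prove can itself fail for small $T$, so the last step is not repairable by a sharper manipulation of the same drift bound. Take $n=3$, $k=1$ (so $p_1=1$ is forced), each player having one worthless action and one private unit-value resource, welfare $w(j)=j$, and start all players on the worthless action: then $\spoa_1(\gee_3,w)=1$ and $W(\Opt{a})=3$, but after one update $W(a^2)=1$ surely, so at $T=2$ the left side is $\tfrac12$ while the claimed bound is $\tfrac{T-1}{2T}\spoa_1(\gee_3,w)W(\Opt{a})=\tfrac34$. When $M<0$ the most your (correct) drift inequality yields is $\mathbb{E}[W(a^t)]\geq\kappa W(\Opt{a})$ for $t\geq 2$, i.e.\ an average-welfare constant $\tfrac{T-1}{T}(1+M)\tfrac{1}{P^\star}$, which recovers the stated constant only when $M\geq-\tfrac12$; the additive example has $M=-\tfrac23$. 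So to complete a proof you must either establish $\sum_\zeta\binom{n}{\zeta}\nu_\zeta^\star\leq P^\star$ (or at least $\leq 2P^\star$) for the welfare rules considered, or restrict $T$/weaken the constant; as written, the substitution into \cref{thm:abr}'s closing argument is the step that fails.
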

The proof is omitted as it is straightforward by rearranging terms in the constraints of \eqref{opt:primal}.

Together, \cref{thm:abr} and \cref{cor:abr_opt} provide insight into the transient performance of non-deterministic multi-agent dynamical systems with collaborative communication.
Future work will study the traits of non-best-response dynamics, namely regret-based decision-making.

\subsection{Numerical Example}\label{subsec:sim}
We support the findings of \cref{subsec:abr} by numerical example.
We randomly generate resource allocation problems and simulate the coalitional asynchronous best response dynamics when groups of size $k \in \{1,2,3,4,5\}$ update.

The resource allocation problems are generated by creating 100 resources with values independently drawn uniformly at random on $[0,1]$.
Each of the 25 agents is endowed with between 1 and 10 actions (also sampled uniformly at random).
For each action of each player, each resource is included in that particular action with probability 0.25.
This defines a tuple $G$.
We use the local welfare function $w(x) = xe^{-x/5}$ to capture some added benefit from having multiple agents use the same resource and eventual diminishing returns and increased cost from over congestion.

We select a random initial condition and run the asynchronous best response dynamics with $p_k = 1$ for one value $k \in \{1,2,3,4,5\}$ (i.e., only groups of exactly size $k$ are sampled, but the simulation is repeated for $1 \leq k \leq 5$.
We ran this simulation 100 times.

In \cref{fig:num_sim:group_rev}, we plot the average welfare across the simulations over the number of group action revisions.
We observe that the larger coalitions provide superior transient and long-run performance.
However, a single group action revision requires more computation for larger coalitions.
In \cref{fig:num_sim:welf_check}, for each coalition size $k \in \{1,\ldots,5\}$, we show a scatter plot of the number of cumulative welfare evaluations and the attained system welfare, along with a trend line fit to the data within two standard deviations of the average number of welfare evaluations.
Here, we observe that for lower values of welfare, the smaller coalitions can attain similar welfare with fewer welfare evaluations but that the larger coalitions reach higher welfare much more regularly.

These conclusions help to identify the trade-off in designing systems with collaborative communication: better performance is attainable at the cost of greater computation.


\begin{figure}[t!]
\vspace{2mm}
    \centering
    \begin{subfigure}[t!]{0.235\textwidth}
        \includegraphics[width=\textwidth]{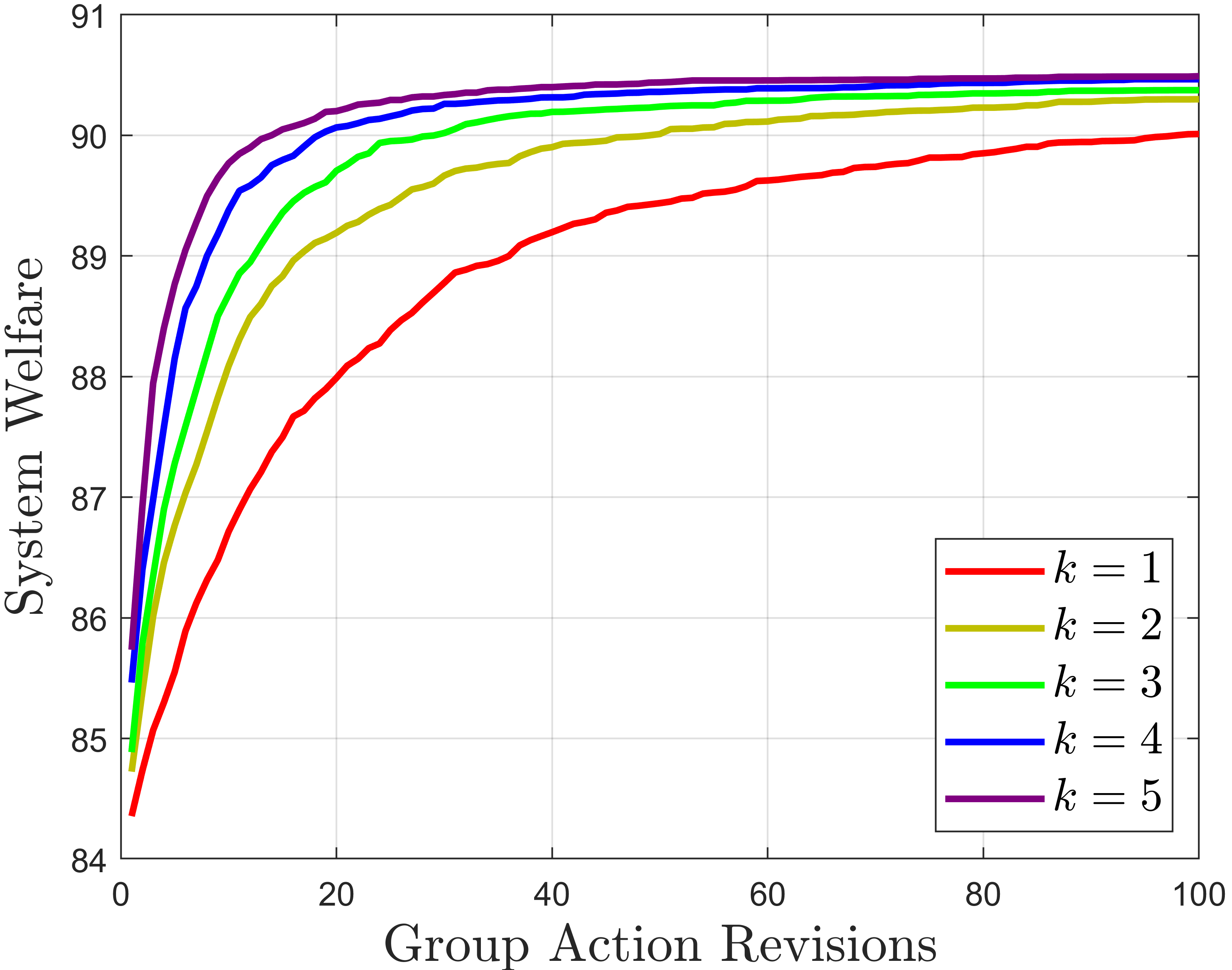}
        \caption{\raggedright \small Group revisions}
        \label{fig:num_sim:group_rev}
    \end{subfigure}
    \begin{subfigure}[t!]{0.235\textwidth}
        \includegraphics[width=\textwidth]{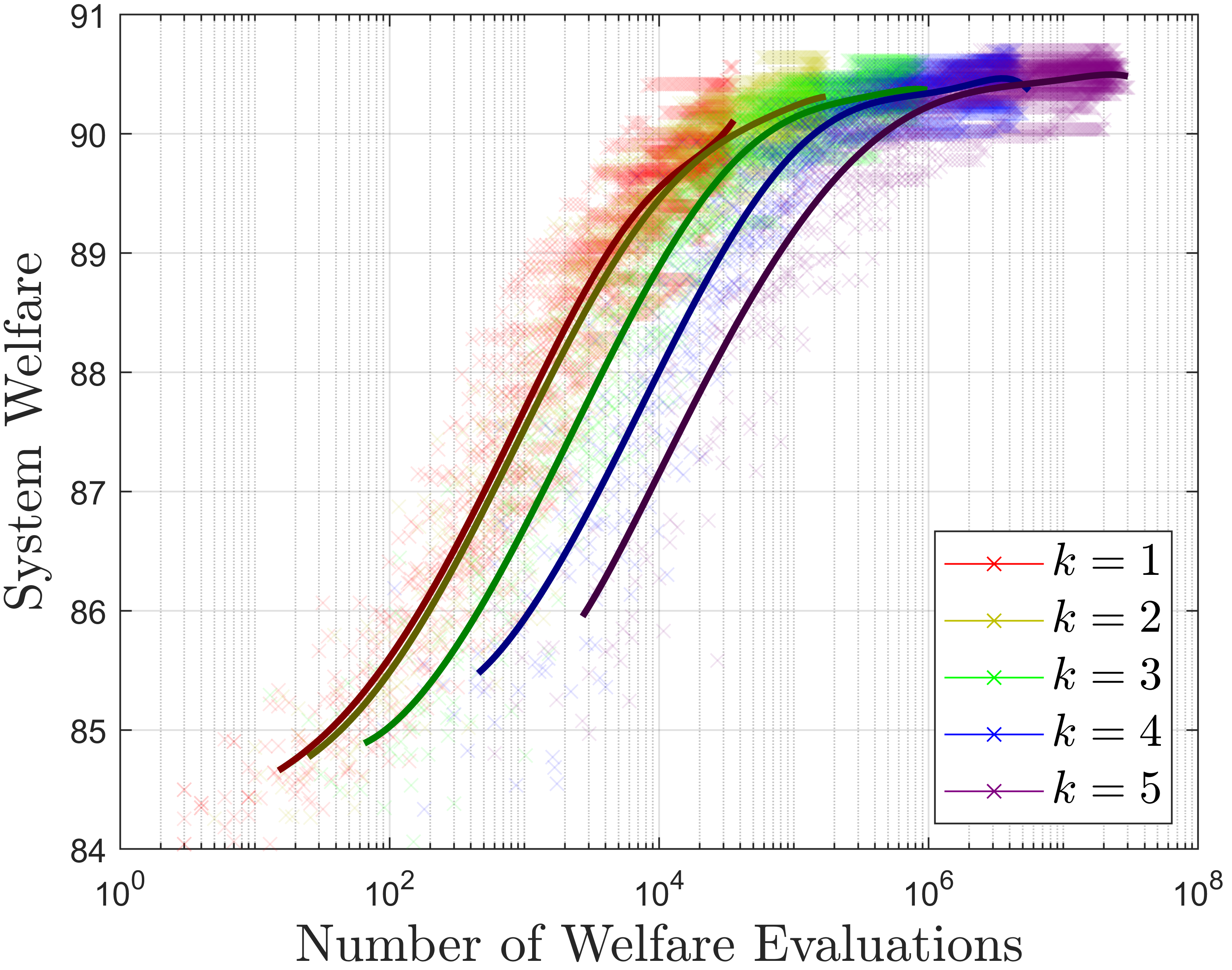}
        \caption{\raggedright \small Welfare checks}
        \label{fig:num_sim:welf_check}
    \end{subfigure}
    \caption{Numerical example of the coalitional asynchronous best response dynamics. In \cref{fig:num_sim:group_rev}, the system welfare is plotted over the number of group action revisions, and in \cref{fig:num_sim:welf_check} it is plotted over the number of welfare evaluations. From this data, we can observe that group revisions offer superior system transient and long-term performance but require more welfare evaluations to compute group actions.}\label{fig:num_sim}
\end{figure}


\begin{figure*}[t]
    \centering
    \includegraphics[width=0.995\textwidth]{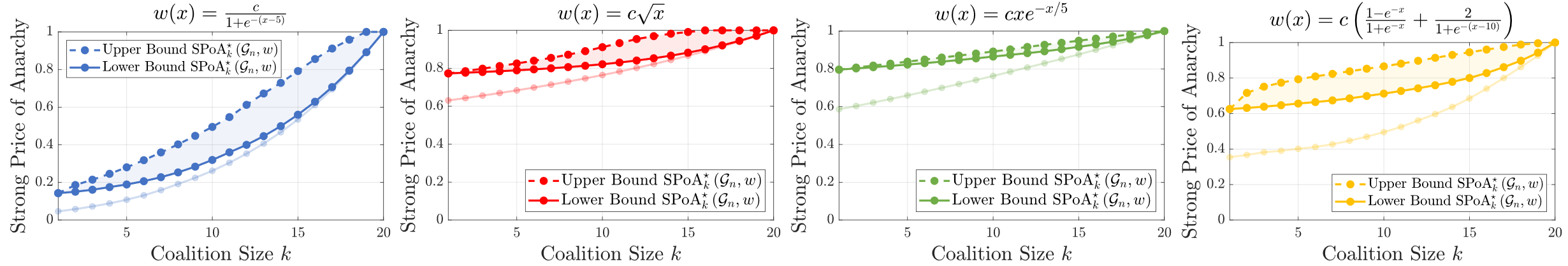}
    \caption{Bounds on the \kstrong Price of Anarchy using the optimal utility function in the class of resource allocation games with welfare function $w$. Upper bound on $\spoa_k^\star(\gee_n,w)$ generated by \cref{prop:util_upper_bound} and lower bound and utility rule that attains it generated by \cref{thm:gsmoothLP}. Compared with the \kstrong price of anarchy when agents optimize the system welfare (lighter line), we demonstrate the possible and guaranteed gain in equilibrium performance attainable by designing group decision-making for collaborative multi-agent systems.}
    \label{fig:spoa_opt_plot_4}
\end{figure*}

\section{Utility Design}\label{sec:design}
Up until this point, agents and groups of agents have been set to optimize the system welfare $W$ over their respective individual or group actions.
Though this is a reasonable approach, the system designer may seek to further improve system performance by designing how a group of agents makes a decision.
Consider that groups of agents instead maximize the objective function $U:\mathcal{A} \rightarrow \mathbb{R}_{\geq 0}$ (henceforth referred to as the \emph{utility function}), i.e.,
\begin{equation}\label{eq:gamma_util_update}
    a_{\Gamma} \in \argmax_{a_\Gamma^\prime \in \mathcal{A}_\Gamma} U(a_\Gamma^\prime,a_{-\Gamma}),
\end{equation}
where ties are still broken at random unless the current group action is in the argmax.
By designing the utility function $U$, the system operator can alter how groups of agents make decisions and, ideally, improve the performance of the system.
A multi-agent system is now captured by the tuple $(G,W,U)$, where the previous results are the special case when $U=W$.

By redefining the objective functions groups of agents seek to maximize, we additionally alter the equilibria that emerge from collaborative decision-making.
We alter the definition of \kstrong Nash equilibria to hold with respect to the utility function, i.e.,
\begin{equation}\label{eq:ksnash_util_def}
    U(\KSNash{a}) \geq U(a^\prime_\Gamma,\KSNash{a}_{-\Gamma}),~\forall a^\prime_\Gamma \in \mathcal{A}_\Gamma,~\Gamma \in \mathcal{C}_{[k]}.
\end{equation}
Let $\KSNE(G,U)$ denote the set of \kstrong Nash equilibria when agents optimize the objective $U$.
The new set of equilibria implies the equilibrium performance guarantee may also change.
As such, we redefine the \kstrong price of anarchy as the approximation of the optimal welfare provided the system equilibria under objective function $U$,
\begin{equation}
    \spoa_k(G,W,U) = \frac{\min_{\KSNash{a} \in \KSNE(G,U)} W(\KSNash{a})}{\max_{\Opt{a} \in \mathcal{A}} W(\Opt{a})}.
\end{equation}

With this new design opportunity, we identify two goals in understanding the new attainable performance of collaborative decision-making: 1) quantifying the performance of a prescribed utility function, and 2) finding a utility function that provides the greatest \kstrong price of anarchy guarantees.
We address these two points in general in \cref{subsec:gsmooth} and more thoroughly within resource allocation problems in \cref{subsec:gra}.

\subsection{Generalized Coalitionally Smooth Games}\label{subsec:gsmooth}
In this section, we consider the general setting and particularly focus on quantifying the \kstrong price of anarchy of a system $(G,W,U)$.
As in \cref{subsec:smooth}, we introduce a notion of smooth systems now generalized to the setting where the agent objective $U$ differs from the system objective $W$.
\begin{definition}\label{def:gsmooth}
A system $(G,W,U)$ is \lmkgsmooth, where $\lambda,\mu \in \mathbb{R}_{\geq 0}^k$, if for all $a,a^\prime \in \mathcal{A}$
\begin{multline}\label{eq:gsmooth_def}
\frac{1}{\binom{n}{\zeta}} \sum_{\Gamma \in \mathcal{C}_\zeta} U(a_\Gamma^\prime,a_{-\Gamma}) - U(a) + W(a)\\
 \geq \lambda_\zeta W(a^\prime) - \mu_{\zeta}W(a),~\forall \zeta \in [k].
\end{multline}
\end{definition}
Like \eqref{eq:smooth_def}, \eqref{eq:gsmooth_def} provides a bound on average deviation effect of a group of size $\zeta$ but on the utility function instead of the welfare.
In Proposition~\ref{prop:gsmooth}, we show that \lmkgsmooth~system permits a bound on the \kstrong price of anarchy.
\begin{proposition}\label{prop:gsmooth}
A system $(G,W,U)$ that is \lmkgsmooth~has $k$-strong price of anarchy satisfying
\begin{equation}\label{eq:gsmooth_prop}
\spoa_k(G,W,U) \geq \frac{\lambda_\zeta}{1+\mu_\zeta},~\forall \zeta \in [k].
\end{equation}
\end{proposition}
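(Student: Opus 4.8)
The plan is to mirror the argument used for \cref{prop:smooth}, replacing the raw smoothness inequality \eqref{eq:smooth_def} with its generalized counterpart \eqref{eq:gsmooth_def} and carrying along the correction term $-U(a)+W(a)$. Fix a \kstrong Nash equilibrium $\KSNash{a} \in \KSNE(G,U)$ (i.e.\ satisfying \eqref{eq:ksnash_util_def}), fix an optimal action $\Opt{a} \in \argmax_{a\in\mathcal{A}} W(a)$, and fix $\zeta \in [k]$. As before, start from the trivial identity $U(\KSNash{a}) = \tfrac{1}{\binom{n}{\zeta}}\sum_{\Gamma\in\mathcal{C}_\zeta} U(\KSNash{a})$, and use the equilibrium condition \eqref{eq:ksnash_util_def} term by term (with $a'_\Gamma = \Opt{a}_\Gamma$ for each $\Gamma \in \mathcal{C}_\zeta \subseteq \mathcal{C}_{[k]}$) to obtain
\begin{equation*}
U(\KSNash{a}) \;\geq\; \frac{1}{\binom{n}{\zeta}}\sum_{\Gamma\in\mathcal{C}_\zeta} U(\Opt{a}_\Gamma,\KSNash{a}_{-\Gamma}).
\end{equation*}

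Next I would invoke \cref{def:gsmooth} with $a = \KSNash{a}$ and $a' = \Opt{a}$, which lower bounds the right-hand side by $\lambda_\zeta W(\Opt{a}) - \mu_\zeta W(\KSNash{a}) + U(\KSNash{a}) - W(\KSNash{a})$. Chaining the two inequalities and cancelling the common $U(\KSNash{a})$ term from both sides leaves $0 \geq \lambda_\zeta W(\Opt{a}) - (1+\mu_\zeta) W(\KSNash{a})$, i.e.\ $W(\KSNash{a})/W(\Opt{a}) \geq \lambda_\zeta/(1+\mu_\zeta)$. Since the chosen equilibrium and $\zeta$ were arbitrary, taking the minimum over $\KSNash{a}\in\KSNE(G,U)$ yields \eqref{eq:gsmooth_prop} for every $\zeta\in[k]$. (The degenerate $0/0$ case is handled by the convention $\spoa_k := 1$ exactly as in the definition.)

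The only point requiring a little care — the ``obstacle,'' such as it is — is the bookkeeping around the extra $-U(a)+W(a)$ term in \eqref{eq:gsmooth_def}: one must check that the $U(\KSNash{a})$ contributed by this term exactly matches the $U(\KSNash{a})$ on the left so that they cancel, leaving a bound purely in terms of $W$. This is precisely why \cref{def:gsmooth} is stated with that particular correction: it is engineered so that the equilibrium inequality (which is about $U$) and the target bound (which is about $W$) compose cleanly. No additional assumptions on the relationship between $U$ and $W$ are needed, and when $U=W$ the correction vanishes and the argument collapses to the proof of \cref{prop:smooth}.
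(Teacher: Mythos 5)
Your proposal is correct and takes essentially the same route as the paper: both combine the equilibrium condition \eqref{eq:ksnash_util_def} (averaged over $\Gamma \in \mathcal{C}_\zeta$) with \cref{def:gsmooth} applied at $a = \KSNash{a}$, $a' = \Opt{a}$, and the $U(\KSNash{a})$ terms cancel to leave $(1+\mu_\zeta)W(\KSNash{a}) \geq \lambda_\zeta W(\Opt{a})$. The only difference is cosmetic ordering of the two inequalities (the paper folds the equilibrium condition into a lower bound on $W(\KSNash{a})$ first), so no further comment is needed.
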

\begin{proof}
Let $\KSNash{a} \in \mathcal{A}$ denote a \kstrong Nash equilibrium when agents follow objective function $U$, and let $\Opt{a} \in \argmax_{a \in \mathcal{A}} W(a)$ denote an optimal joint action.
For any $\zeta \in [k]$, we have
\begin{subequations}
\begin{align}
W(\KSNash{a}) &\hs\geq\hs \frac{1}{\binom{n}{\zeta}}\hs \sum_{\Gamma \in \mathcal{C}_\zeta}\hs U(\Opt{a}_\Gamma,\KSNash{a}_{-\Gamma}) \hs -\hs U(\KSNash{a}) \hs +\hs W(\KSNash{a})\label{eq:gsmooth_proof_a}\\
&\geq \lambda_\zeta W(\Opt{a}) - \mu_{\zeta}W(\KSNash{a}).\label{eq:gsmooth_proof_b}
\end{align}
\end{subequations}
Where \eqref{eq:gsmooth_proof_a} holds from $\frac{1}{\binom{n}{\zeta}} \sum_{\Gamma \in \mathcal{C}_\zeta} U(\Opt{a}_\Gamma,\KSNash{a}_{-\Gamma}) - U(\KSNash{a}) \geq 0$ by $\KSNash{a}$ being a \kstrong Nash equilibrium and \eqref{eq:smooth_proof_c} holds from \cref{def:gsmooth}.
Rearranging, we get $W(\KSNash{a})/W(\Opt{a}) \geq \lambda_\zeta/(1+\mu_\zeta)$.
\end{proof}

Beyond quantifying the \kstrong price of anarchy for a system $(G,W,U)$, one may wish to find the utility function which provides the best efficiency guarantee, i.e., $$U \in \argmax_{U^\prime:\mathcal{A}\rightarrow\mathbb{R}_{\geq 0}} \spoa_k(G,W,U^\prime).$$
For a specific problem $(G,W)$, it is possible to design a utility function which guarantees that a system optimal $\Opt{a}$ is a unique equilibrium and provides $\spoa_k(G,W,U)=1$ (e.g., $U(a) = \sum_{i \in N} \indicator[a_i = \Opt{a}_i]$).
However, this would require knowing the optimal allocations a priori, which poses several problems, including: 1) computing an optimal allocation can be intractable, and 2) system parameters may be subject to modeling errors, noise, or changes over time, causing the optimal allocations to change.
As such, we will consider the design of \emph{utility rules}, which provide a set of instructions to construct a utility function across a class of systems and eliminate the computational burden of solving for a new utility function for each system while maintaining improved performance guarantees.
Luckily, the approach in Proposition~\ref{prop:gsmooth} is amenable to generating performance guarantees across a class of systems, and in \cref{subsec:gra}, we will investigate optimal utility rules more thoroughly in resource allocation problems.


\subsection{Resource Allocation Games}\label{subsec:gra}
In this section, we consider the \kstrong price of anarchy in classes of resource allocation problems when the agents' objective is derived from a utility rule $u \in \mathbb{R}^{n+1}_{\geq 0}$.
In an agent environment $G = (N,\mathcal{A},\mathcal{R},\{v_r\}_{r \in \mathcal{R}})$, the utility rule $u$ can be applied to derive the utility function
$$U(a) = \sum_{r \in \mathcal{R}} v_r u(|a|_r).$$
To normalize the utility function, we set $u(0)=0$.
We ultimately consider the performance of a utility rule $u$ across all agent environments $G \in \gee_n$ with welfare function $w$.
We slightly abuse notation to refer to a system by the tuple $(G,w,u)$.
To quantify this performance, we generalize the \kstrong price of anarchy bound defined in \cref{subsec:rr} to hold for cases where groups of agents optimize the utility function.
\begin{equation}
\spoa_k(\gee_n,w,u) = \min_{G \in \gee_n} \spoa_k(G,w,u).
\end{equation}
The performance ratio is parameterized by the pair $(w,u)$; as such, we will discuss the effectiveness of a utility rule $u$ with respect to a given welfare function  $w$.

Taking the utility rule approach completely eliminates the computational cost of deriving a utility function for each problem instance; now we seek to understand the capabilities of this approach in two ways: 1) in \cref{thm:gsmoothLP} we demonstrate how we can construct utility rules with good performance guarantees, and 2) in Proposition~\ref{prop:gsmooth} we provide an upper bound on the best attainable performance a utility rule can provide.
In \cref{cor:design}, we provide a formal condition on when the constructed utility rule is optimal.

\begin{theorem}\label{thm:gsmoothLP}
Any resource allocation problem $(G,W) \in \gee_n\times\{w\}$ with the utility rule $\widetilde{u}_\zeta$ is $(1,\widetilde{\rho}_\zeta-1)$-$k$-generalized-coallitionally smooth, where $\widetilde{u}_\zeta$ and $\widetilde{\rho}_\zeta$ are solutions to the linear program,
\begin{align}
&(\widetilde{\rho}_\zeta,\widetilde{u}_\zeta)\in \argmin_{\rho\geq 0, u \in \mathbb{R}^{n+1}_{\geq 0}} ~~~ \rho \nonumber\\
&{\rm s.t.} \hspace{8pt}0 \geq w(o\hs +\hs x)-\rho w(e\hs +\hs x) + \nonumber\\
&~~ \left( \hspace{-1.2pt} \hs \binom{n}{\zeta} u(e\hs +\hs x) \hs -\hs  \hs \sum_{\substack{0\leq\alpha\leq e\\ 0\leq\beta\leq o\\ \alpha+\beta \leq \zeta }} \hs {e\choose \alpha} {o \choose \beta}\binom{n\hs - \hs e\hs - \hs o}{\zeta \hs - \hs \alpha \hs - \hs \beta} u (e\hs +\hs x\hs +\hs \beta\hs -\hs \alpha)  \hs \right)\nonumber\\
&\hspace{160pt}\forall (e,x,o) \in \mathcal{I}.\tag{Q$\zeta$}\label{opt:Gsmooth}
\end{align}
\end{theorem}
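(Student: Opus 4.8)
The plan is to mirror the proof of Proposition~\ref{prop:smoothLP} as closely as possible, substituting the utility rule $u$ for the welfare rule $w$ in the coalitional-deviation term and then verifying that the resulting linear constraints are exactly the condition that $(1,\widetilde\rho_\zeta-1)$ are valid generalized-smoothness parameters. First I would fix a resource allocation problem $(G,W)\in\gee_n\times\{w\}$, fix any two joint actions $a,a'\in\mathcal{A}$, and re-use the resource labels $(e_r,x_r,o_r)$ and the aggregate weights $\theta(e,x,o)=\sum_{r\in\mathcal{R}_{(e,x,o)}}v_r$ introduced in the proof of Proposition~\ref{prop:smoothLP}. Exactly as there, $W(a)=\sum_{e,x,o}\theta(e,x,o)w(e+x)$ and $W(a')=\sum_{e,x,o}\theta(e,x,o)w(o+x)$; and because $U(a)=\sum_{r\in\mathcal{R}}v_r u(|a|_r)$ has the identical functional form, we get $U(a)=\sum_{e,x,o}\theta(e,x,o)u(e+x)$ and, by the same counting decomposition of $\mathcal{C}_\zeta$ according to how many deviating/non-deviating agents in a coalition touch each resource,
\[
\sum_{\Gamma\in\mathcal{C}_\zeta}U(a'_\Gamma,a_{-\Gamma})
=\sum_{e,x,o}\theta(e,x,o)\sum_{\substack{0\le\alpha\le e\\0\le\beta\le o\\\alpha+\beta\le\zeta}}\binom{e}{\alpha}\binom{o}{\beta}\binom{n-e-o}{\zeta-\alpha-\beta}u(e+x+\beta-\alpha).
\]

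Next I would substitute these expressions into the generalized-smoothness inequality \eqref{eq:gsmooth_def} with the candidate parameters $\lambda_\zeta=1$ and $\mu_\zeta=\widetilde\rho_\zeta-1$. The left-hand side becomes $\frac{1}{\binom{n}{\zeta}}\sum_{\Gamma\in\mathcal{C}_\zeta}U(a'_\Gamma,a_{-\Gamma})-U(a)+W(a)$, and after multiplying through by $\binom{n}{\zeta}$ and collecting the coefficient of each $\theta(e,x,o)$, the desired inequality for all $a,a'$ is implied (using $\theta\ge0$) by the per-label inequalities
\[
\binom{n}{\zeta}\Bigl(w(e+x)-u(e+x)\Bigr)+\sum_{\substack{0\le\alpha\le e\\0\le\beta\le o\\\alpha+\beta\le\zeta}}\binom{e}{\alpha}\binom{o}{\beta}\binom{n-e-o}{\zeta-\alpha-\beta}u(e+x+\beta-\alpha)
\ge \binom{n}{\zeta}\Bigl(w(o+x)-(\widetilde\rho_\zeta-1)w(e+x)\Bigr)
\]
for every $(e,x,o)\in\mathcal{I}$. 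Rearranging — moving everything to one side and dividing by $\binom{n}{\zeta}$ — gives precisely the constraint appearing in \eqref{opt:Gsmooth} with $\rho=\widetilde\rho_\zeta$ and $u=\widetilde u_\zeta$ (the $\binom{n}{\zeta}u(e+x)$ term is the diagonal contribution that the coalition sum omits relative to the scaled $U(a)$, and the $-W$ term converts to the $\mu_\zeta$ slack). Since $(\widetilde\rho_\zeta,\widetilde u_\zeta)$ is feasible for \eqref{opt:Gsmooth} by definition, these inequalities hold, so \eqref{eq:gsmooth_def} holds for this $\zeta$ for all $a,a'$ and all $G\in\gee_n$; that is exactly the claim that every $(G,W)\in\gee_n\times\{w\}$ is $(1,\widetilde\rho_\zeta-1)$-$k$-generalized-coalitionally smooth. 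Note $\widetilde\rho_\zeta-1\ge0$ is not needed for the smoothness inequality itself but is consistent with Definition~\ref{def:gsmooth}'s requirement $\mu\in\mathbb{R}_{\ge0}^k$, and follows from the constraint at $(e,x,o)=(1,0,0)$ together with $u(0)=0$, $w(1)>0$.

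The only genuinely delicate point — and the step I expect to be the main obstacle — is bookkeeping the \emph{diagonal term}: when I scale \eqref{eq:gsmooth_def} by $\binom{n}{\zeta}$, the term $-\binom{n}{\zeta}U(a)$ must be matched against the coalition sum, and one must check that the coefficient of $u(e+x)$ coming out of $\sum_{\Gamma\in\mathcal{C}_\zeta}U(a'_\Gamma,a_{-\Gamma})$ for the "no deviating agent of this resource is in $\Gamma$" term is $\binom{n-e-o}{\zeta}$, not $\binom{n}{\zeta}$, which is why the constraint in \eqref{opt:Gsmooth} retains the explicit $\binom{n}{\zeta}u(e+x)$ correction (it is $\binom{n}{\zeta}u(e+x)$ minus the full weighted coalition sum, the $\alpha=\beta=0$ piece of which is $\binom{n-e-o}{\zeta}u(e+x)$). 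Getting this constant right — and confirming that the $W(a)$-versus-$\mu_\zeta W(a)$ substitution produces exactly $\mu_\zeta=\widetilde\rho_\zeta-1$ rather than an off-by-one — is the crux; everything else is a direct transcription of the Proposition~\ref{prop:smoothLP} argument with $u$ in place of $w$ in the deviation term and $\lambda_\zeta$ pinned to $1$. I would close by remarking that, combined with Proposition~\ref{prop:gsmooth}, this yields $\spoa_k(G,W,U)\ge 1/\widetilde\rho_\zeta$ for every such system, which is the use made of the theorem in the sequel.
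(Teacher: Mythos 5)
Your overall route is the paper's route — reuse the $(e,x,o)$/$\theta$ parameterization from Proposition~\ref{prop:smoothLP}, reduce \eqref{eq:gsmooth_def} to per-label inequalities by nonnegativity of $\theta$, and identify those inequalities with the constraints of \eqref{opt:Gsmooth} — but the identification step, which you yourself flag as the crux, does not go through as you state it. Your per-label condition for $(1,\widetilde\rho_\zeta-1)$-generalized smoothness with utility rule $u$ is, after rearranging, $0 \geq \binom{n}{\zeta}\bigl(w(o+x)-\rho\, w(e+x)\bigr) + \binom{n}{\zeta}u(e+x) - \sum_{\alpha,\beta}\binom{e}{\alpha}\binom{o}{\beta}\binom{n-e-o}{\zeta-\alpha-\beta}u(e+x+\beta-\alpha)$, whereas the constraint of \eqref{opt:Gsmooth} carries the welfare terms \emph{without} the factor $\binom{n}{\zeta}$. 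These are not the same inequality, and dividing yours by $\binom{n}{\zeta}$ rescales the $u$-terms, not just the $w$-terms, so it is not "precisely the constraint appearing in \eqref{opt:Gsmooth}." The discrepancy is real: on a label $(0,0,o)$ with $o\geq 1$ the welfare part equals $w(o)>0$, so your required inequality is strictly stronger than the LP constraint by the factor $\binom{n}{\zeta}$, and feasibility of $(\widetilde\rho_\zeta,\widetilde u_\zeta)$ in \eqref{opt:Gsmooth} does not imply $(1,\widetilde\rho_\zeta-1)$-smoothness for the literal rule $\widetilde u_\zeta$.

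The missing ingredient is the change of variables the paper uses: it first writes the per-label condition for general $(\lambda_\zeta,\mu_\zeta)$, substitutes $\rho=(1+\mu_\zeta)/\lambda_\zeta$ and $\nu=1/\bigl(\binom{n}{\zeta}\lambda_\zeta\bigr)$, and then absorbs $\nu$ into the utility via $u^\prime=\nu u$; the decision variable $u$ in \eqref{opt:Gsmooth} is therefore a $\nu$-scaled copy of the utility rule. Concretely, with $\lambda_\zeta=1$ one has $\nu=1/\binom{n}{\zeta}$, so LP-feasibility of $(\widetilde\rho_\zeta,\widetilde u_\zeta)$ yields $(1,\widetilde\rho_\zeta-1)$-generalized smoothness for the rescaled rule $\binom{n}{\zeta}\widetilde u_\zeta$; since positive scaling of the utility rule leaves every group best response, hence the equilibrium set and the price-of-anarchy conclusion via Proposition~\ref{prop:gsmooth}, unchanged, the theorem's statement follows. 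Your proof needs either this rescaling argument or the general-$(\lambda,\mu)$ change of variables to close the factor-$\binom{n}{\zeta}$ gap. Two smaller points: the nonnegativity $\widetilde\rho_\zeta\geq 1$ (so that $\mu_\zeta\geq 0$) comes from the label $(e,x,o)=(0,1,0)$, where the constraint reduces to $0\geq(1-\rho)w(1)$; the $(1,0,0)$ constraint you cite only gives $\rho\, w(1)\geq\binom{n-1}{\zeta-1}u(1)\geq 0$. Your counting of the $\alpha=\beta=0$ coefficient $\binom{n-e-o}{\zeta}$ versus the explicit $\binom{n}{\zeta}u(e+x)$ correction is correct and matches the paper.
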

\begin{proof}
Consider the parameterization described in the proof of Proposition~\ref{prop:smoothLP}, where for any two actions $a,a^\prime \in \mathcal{A}$, we can rewrite $W(a) = \sum_{e,x,o} \theta(e,x,o)w(e+x)$ and $W(a^\prime) = \sum_{e,x,o}\theta(e,x,o)w(o+x)$.
Now, we can additionally rewrite $U(a) = \sum_{e,x,o} \theta(e,x,o)u(e+x)$ and 
\begin{multline*}
\sum_{\Gamma \in \mathcal{C}_\zeta}  W(a^\prime_\Gamma,a_{-\Gamma})\\ = \sum_{e,x,o} \theta(e,x,o) \sum_{\substack{0 \leq \alpha \leq e\\ 0\leq \beta \leq o \\ \alpha+\beta \leq \zeta}} \hspace{-2mm} {e\choose \alpha} {o \choose \beta}\binom{n\hs - \hs e\hs - \hs o}{\zeta \hs - \hs \alpha \hs - \hs \beta}w(e \hspace{-1mm} + \hspace{-1mm} x \hspace{-1mm} + \hspace{-1mm} \beta \hspace{-1mm} - \hspace{-1mm} \alpha).
\end{multline*}

We can now write out \eqref{eq:gsmooth_def}, the \lmkgsmooth~constraint, as
\begin{multline*}
\sum_{e,x,o} \theta(e,x,o) \Bigg( \frac{1}{\binom{n}{\zeta}}\sum_{\substack{0 \leq \alpha \leq e\\ 0\leq \beta \leq o \\ \alpha+\beta \leq \zeta}} \hspace{-2mm} {e\choose \alpha} \hs {o \choose \beta} \hs \binom{n\hs - \hs e\hs - \hs o}{\zeta \hs - \hs \alpha \hs - \hs \beta}u(e  \hspace{-0.5mm}+ \hspace{-1mm} x \hs + \hs \beta \hs - \hs \alpha)\\ 
- u(e+ x)\hs\Bigg)
 \geq \sum_{e,x,o}\theta(e,x,o)\left(\lambda_\zeta w(o\hs +\hs x) \hs - \hs (\mu_\zeta\hs +\hs 1) w(e\hs +\hs x)\right).
\end{multline*}
As before, we can observe that this constraint is sufficiently satisfied when 
\begin{multline}\label{eq:gsmooth_comp_constraint}
\frac{1}{\binom{n}{\zeta}}\sum_{\substack{0 \leq \alpha \leq e\\ 0\leq \beta \leq o \\ \alpha+\beta \leq \zeta}} \hspace{-2mm} {e\choose \alpha} \hs {o \choose \beta} \hs \binom{n\hs - \hs e\hs - \hs o}{\zeta \hs - \hs \alpha \hs - \hs \beta}u(e  \hs + \hs x \hs + \hs \beta \hs - \hs \alpha) - u(e+x)\\
 \geq \lambda_\zeta w(o\hs +\hs x) \hs - \hs (\mu_\zeta\hs +\hs 1) w(e\hs +\hs x),\quad \forall (e,x,o) \in \mathcal{I}.
\end{multline}

The task of finding smoothness parameters that give the best price of anarchy guarantee becomes the same problem as \eqref{opt:smooth_comp} but now with constraint set \eqref{eq:gsmooth_comp_constraint}.
By substituting the decision variables $\rho = (1+\mu_\zeta)/\lambda_\zeta$ and $\nu = 1/\left(\binom{n}{\zeta}\lambda_\zeta\right) \geq 0$, we attain the new constraint set
\begin{multline}
0 \geq w(o\hs +\hs x)-\rho w(e\hs +\hs x) +\\
\nu\left( \hspace{-1.2pt} \hs \binom{n}{\zeta} u(e\hs +\hs x) \hs -\hs  \hs \sum_{\substack{0\leq\alpha\leq e\\ 0\leq\beta\leq o\\ \alpha+\beta \leq \zeta }} \hs {e\choose \alpha} {o \choose \beta}\binom{n\hs - \hs e\hs - \hs o}{\zeta \hs - \hs \alpha \hs - \hs \beta} u (e\hs +\hs x\hs +\hs \beta\hs -\hs \alpha)  \hs \right)\\
\forall (e,x,o) \in \mathcal{I}.
\end{multline}
The new objective\footnote{As an aside, the transformed program up to this point can be used to evaluate the performance of a specified utility rule.} becomes $1/\rho$.

Finally, we let $u \in \mathbb{R}^n_{\geq 0}$ become a decision variable in the program.
Observe that every occurrence of $u$ is multiplied by $\nu$, and every occurrence of $\nu$ multiplies $u$.
As such, we can define the new decision variable $u^\prime = \nu u$ and retrieve the linear program \eqref{opt:Gsmooth}.
\end{proof}

The utility rule $\hat{u}_\zeta$ that \eqref{opt:Gsmooth} provides us some guarantee on attainable performance from designing group decision-making in collaborative systems.
However, it is not yet clear if these are the best possible utility rules.
To understand what the best possible performance is of a collaborative system, we define the optimal \kstrong price of anarchy as
\begin{equation}
\spoa_k^\star(\gee_n,w) = \sup_{u:[n] \rightarrow{R}_{\geq 0}} \spoa_k(\gee_n,w,u).
\end{equation}
This upper bound informs us of what efficiency is possible to hope for out of a collaborative system.
In Proposition~\ref{prop:util_upper_bound}, we bound this quantity.

\begin{proposition}\label{prop:util_upper_bound}
For the class of resource allocation problems $\gee_n\times\{w\}$, when agents maximize the optimal utility design objective $u^\star$,
\begin{equation}
\spoa_k^\star(\gee_n,w) \leq 1/Q^\star(n,w,k),
\end{equation}
where $Q^\star(n,w,k)$ is value of the linear program
\begin{align}
&Q^\star(n,w,k) = \min_{\rho\geq 0, \{u_\zeta \in \mathbb{R}^{n+1}_{\geq 0}\}_{\zeta \in [k]}} ~~~ \rho \nonumber\\
&{\rm s.t.} \hspace{8pt}0 \geq w(o\hs +\hs x)-\rho w(e\hs +\hs x) + \nonumber\\
&\sum_{\zeta\in [k]}\hs\left( \hspace{-1.2pt} \hs \binom{n}{\zeta} u_\zeta(e\hs +\hs x) \hs -\hs \hs  \hs \sum_{\substack{0\leq\alpha\leq e\\ 0\leq\beta\leq o\\ \alpha+\beta \leq \zeta }}\hs \hs {e\choose \alpha} \hs {o \choose \beta} \hs\binom{n\hs - \hs e\hs - \hs o}{\zeta \hs - \hs \alpha \hs - \hs \beta} u_\zeta (e\hs +\hs x\hs +\hs \beta\hs -\hs \alpha)  \hs \right)\nonumber\\
&\hspace{140pt}\forall (e,x,o) \in \mathcal{I}.\tag{Q$[k]$}\label{opt:util_opt}
\end{align}
\end{proposition}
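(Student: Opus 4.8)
The plan is to reduce the bound to a statement about a single utility rule, and then to observe that \eqref{opt:util_opt} is a relaxation of the corresponding single-rule program, so its optimal value can only be smaller. Fix an arbitrary utility rule $u:\{0,\dots,n\}\to\mathbb{R}_{\ge 0}$ with $u(0)=0$, and let $\widetilde{Q}(u)$ denote the optimal value of the linear program in the variables $(\rho,\nu)$, with $\rho\ge 0$ and $\nu\in\mathbb{R}^k_{\ge 0}$, obtained from \eqref{opt:util_opt} by requiring $u_\zeta=\nu_\zeta u$; equivalently, this is the program appearing in the proof of \cref{thm:gsmoothLP} combined over all $\zeta\in[k]$, before $u$ is promoted to a decision variable. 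I would establish the theorem in two steps: (i) $\spoa_k(\gee_n,w,u)\le 1/\widetilde{Q}(u)$ for every $u$, and (ii) $Q^\star(n,w,k)\le\widetilde{Q}(u)$ for every $u$; together these give $\spoa_k^\star(\gee_n,w)=\sup_u\spoa_k(\gee_n,w,u)\le\sup_u 1/\widetilde{Q}(u)=1/\inf_u\widetilde{Q}(u)\le 1/Q^\star(n,w,k)$.

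For step (i) I would construct a worst-case resource allocation problem realizing a dual-optimal solution of $\widetilde{Q}(u)$, exactly in the spirit of the converse direction of \cref{thm:spoa}. The LP dual of $\widetilde{Q}(u)$ produces nonnegative weights $\{\theta^\star(e,x,o)\}_{(e,x,o)\in\mathcal{I}}$ with $\sum_{(e,x,o)}\theta^\star(e,x,o)\,w(e+x)\le 1$ and $\sum_{(e,x,o)}\theta^\star(e,x,o)\,w(o+x)=\widetilde{Q}(u)$, together with one further inequality per coalition size $\zeta\in[k]$ coming from the primal variables $u(1),\dots,u(n)$. Following the parameterization of \cref{prop:smoothLP} and \cref{thm:gsmoothLP} in reverse, I would realize $\theta^\star$ as an agent-symmetric resource allocation problem on $n$ agents with two actions each (invariant under every relabeling of agents), in which a profile $\hat a$ (every agent plays its first action) puts exactly $e+x$ agents on each resource of label $(e,x,o)$ and a profile $a'$ (every agent plays its second action) puts exactly $o+x$ agents on it, so that $W(\hat a)=\sum_{(e,x,o)}\theta^\star(e,x,o)\,w(e+x)$ and $W(a')=\widetilde{Q}(u)$. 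Expanding $U(\hat a)$ and $\sum_{\Gamma\in\mathcal{C}_\zeta}U(a'_\Gamma,\hat a_{-\Gamma})$ via the same parameterization, the remaining dual inequalities read precisely $\frac{1}{\binom{n}{\zeta}}\sum_{\Gamma\in\mathcal{C}_\zeta}U(a'_\Gamma,\hat a_{-\Gamma})\le U(\hat a)$ for all $\zeta\in[k]$; by the agent symmetry each individual $\zeta$-coalition attains this average, and since each agent has only two actions no partial coalitional deviation can do better, so $\hat a\in\KSNE(G,U)$. As $W(\Opt{a})\ge W(a')$, this yields $\spoa_k(G,w,u)\le W(\hat a)/W(a')\le 1/\widetilde{Q}(u)$, hence $\spoa_k(\gee_n,w,u)\le 1/\widetilde{Q}(u)$.

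For step (ii), given any feasible $(\rho,\nu)$ for $\widetilde{Q}(u)$, setting $u_\zeta:=\nu_\zeta u\in\mathbb{R}^{n+1}_{\ge 0}$ for each $\zeta\in[k]$ turns every constraint, by linearity, into the corresponding constraint of \eqref{opt:util_opt}; hence $(\rho,\{u_\zeta\}_{\zeta\in[k]})$ is feasible for \eqref{opt:util_opt} and $Q^\star(n,w,k)\le\rho$, so $Q^\star(n,w,k)\le\widetilde{Q}(u)$. This same substitution also makes transparent why the bound need not be tight: \eqref{opt:util_opt} may use a different scaled rule $u_\zeta$ for each coalition size, whereas a genuine utility rule must fix a single one.

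I expect step (i) to be the main obstacle. The delicate points are making the construction genuinely symmetric across all $n$ agents (so that a single $\zeta$-coalition, and not merely the average over all of them, respects the dual inequality), verifying that $\hat a$ is a $k$-strong Nash equilibrium against all coalitions and all deviations rather than only the deviation towards $a'$, and handling degenerate cases (infeasibility or unboundedness of the linear programs for pathological $w$, the $0/0=1$ convention in the definition of $\spoa_k$, and approximating $\theta^\star$ by finitely many resources with nonnegative values). Step (ii) and the final chain of inequalities are routine once step (i) is in place.
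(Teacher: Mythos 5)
Your proposal is correct and follows essentially the same route as the paper: for a fixed utility rule $u$ you bound $\spoa_k(\gee_n,w,u)$ by the value of the $u$-specific linear program via the same relaxation-plus-symmetric-worst-case-construction machinery as \cref{thm:spoa} (the paper invokes this by reusing its ring construction with the equilibrium constraints written in terms of $u$), and then you observe that the substitution $u_\zeta=\nu_\zeta u$ shows \eqref{opt:util_opt} is a relaxation over all rules, so $Q^\star(n,w,k)\le\widetilde{Q}(u)$ for every $u$. The delicate points you flag (agent symmetry so each coalition attains the averaged constraint, subsuming partial deviations via the smaller-$\zeta$ constraints, and LP feasibility/boundedness) are exactly the ingredients the paper's terse proof borrows from \cref{thm:spoa}.
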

The proof appears in the appendix.

Note that \cref{thm:gsmoothLP} provides a utility rule with associated performance guarantee which lower bounds $\spoa_k^\star(\gee_n,w)$, and Proposition~\ref{prop:util_upper_bound} provides an upper bound.
In \cref{cor:design}, we note that when these two bounds match, we have a tight bound on $\spoa_k^\star(\gee_n,w)$ as well as an optimal utility rule.

\begin{corollary}\label{cor:design}
For the class of resource allocation problems $\gee_n\times\{w\}$, if the value of \eqref{opt:Gsmooth} satisfies $\rho^\star_\zeta = Q^\star(n,w,k)$, then $\spoa_k^\star(\gee_n,w) = 1/Q^\star(n,w,k)$ is a tight bound and a solution $\widetilde{u}_\zeta$ to \eqref{opt:Gsmooth} is an optimal utility rule.
\end{corollary}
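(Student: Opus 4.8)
The plan is to prove \cref{cor:design} by a simple sandwich argument: I would exhibit $1/Q^\star(n,w,k)$ simultaneously as a lower bound and as an upper bound on $\spoa_k^\star(\gee_n,w)$, the lower bound coming from the explicit utility rule $\widetilde{u}_\zeta$ of \cref{thm:gsmoothLP} combined with \cref{prop:gsmooth}, and the upper bound coming directly from \cref{prop:util_upper_bound}; under the stated hypothesis the two bounds coincide and the claim follows.

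First I would record the lower bound. By \cref{thm:gsmoothLP}, every $(G,W)\in\gee_n\times\{w\}$ equipped with the utility rule $\widetilde{u}_\zeta$ is $(1,\widetilde{\rho}_\zeta-1)$-$k$-generalized-coalitionally smooth; since the underlying smoothness inequality \eqref{eq:gsmooth_comp_constraint} involves only $w$, $\widetilde{u}_\zeta$, and the label $(e,x,o)$ — and in particular is independent of the instance $G$ and of the profiles $a,a'$ — this holds uniformly over the whole class. Feeding $\lambda_\zeta=1$, $\mu_\zeta=\widetilde{\rho}_\zeta-1$ into \cref{prop:gsmooth} then yields $\spoa_k(G,w,\widetilde{u}_\zeta)\ge 1/\widetilde{\rho}_\zeta$ for every $G\in\gee_n$, hence $\spoa_k(\gee_n,w,\widetilde{u}_\zeta)\ge 1/\widetilde{\rho}_\zeta = 1/\rho^\star_\zeta$. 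Because $\spoa_k^\star(\gee_n,w)$ is by definition the supremum of $\spoa_k(\gee_n,w,u)$ over utility rules $u$, this gives $\spoa_k^\star(\gee_n,w)\ge 1/\rho^\star_\zeta$.

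Next I would invoke \cref{prop:util_upper_bound}, which states $\spoa_k^\star(\gee_n,w)\le 1/Q^\star(n,w,k)$. Under the hypothesis $\rho^\star_\zeta = Q^\star(n,w,k)$ the lower and upper bounds coincide, so
\[
\frac{1}{Q^\star(n,w,k)} \;=\; \frac{1}{\rho^\star_\zeta} \;\le\; \spoa_k^\star(\gee_n,w) \;\le\; \frac{1}{Q^\star(n,w,k)},
\]
forcing equality throughout. In particular $\spoa_k(\gee_n,w,\widetilde{u}_\zeta) = \spoa_k^\star(\gee_n,w) = 1/Q^\star(n,w,k)$, so the supremum defining $\spoa_k^\star$ is attained at $\widetilde{u}_\zeta$ — i.e.\ $\widetilde{u}_\zeta$ is an optimal utility rule — and $1/Q^\star(n,w,k)$ equals the optimal $k$-strong price of anarchy, hence is a tight bound.

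I do not anticipate a genuine obstacle, since the whole argument is the composition of results already in hand; the only points deserving care are (i) noting that the smoothness guarantee of \cref{thm:gsmoothLP} is parameter-free and therefore lower-bounds $\spoa_k(\gee_n,w,\widetilde{u}_\zeta)$ over the entire class rather than just for individual instances, and (ii) observing that $\lambda_\zeta=1>0$, so the ratio $\lambda_\zeta/(1+\mu_\zeta)=1/\widetilde{\rho}_\zeta$ returned by \cref{prop:gsmooth} is well-defined and non-degenerate. It is also worth remarking (though not strictly needed) that $Q^\star(n,w,k)\le\rho^\star_\zeta$ holds in general — setting the remaining utility rules to zero in \eqref{opt:util_opt} recovers \eqref{opt:Gsmooth} — so the hypothesis is precisely the statement that the single program \eqref{opt:Gsmooth} is already optimal among the combined ones.
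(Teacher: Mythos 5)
Your proposal is correct and follows essentially the same route as the paper's own proof: a sandwich argument with the lower bound $1/\rho^\star_\zeta$ coming from \cref{thm:gsmoothLP} together with \cref{prop:gsmooth}, the upper bound $1/Q^\star(n,w,k)$ from \cref{prop:util_upper_bound}, and tightness plus optimality of $\widetilde{u}_\zeta$ following when the two coincide. Your added remarks (uniformity of the smoothness guarantee over $\gee_n$ and $Q^\star \leq \rho^\star_\zeta$ in general) are correct elaborations but do not change the argument.
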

\begin{proof}
This follows immediately from $1/\rho_\zeta^\star = \frac{\lambda_\zeta}{1+\mu_\zeta}$ being a lower bound on $\spoa_k^\star(\gee_n,w)$ and the reciprocal of the value of \eqref{opt:util_opt}, $1/Q^\star$ being an upper bound.
When the two match, the bound must be tight.
\end{proof}
The two bounds coinciding is not guaranteed but does occur at the extremes ($k=1$ and $k=n$); further, the gap between the two bounds (if present) is often small, and the lower bound attained by the utility rule constructed in \cref{thm:gsmoothLP} often demonstrates a significant improvement over the setting where agents simply optimize the system objective.
Consider the four welfare functions from \cref{fig:spoa_plot_4} again; for each, we find that the utility rule computed using \cref{thm:gsmoothLP} and the upper bound on $\spoa_k^\star(\gee_n,w)$ using \cref{prop:util_upper_bound}.
In \cref{fig:spoa_opt_plot_4} we plot these lower and upper bounds on $\spoa_k^\star(\gee_{20},w)$ for each utility function and for each value of $1 \leq k \leq n$; these values are juxtaposed with the \kstrong price of anarchy when agents optimize the system objective $w$ to demonstrate the possible gain in performance from designing the agents' objective in collaborative systems.

\section{Conclusion}
In this work, we provided a variety of tools for evaluating the benefits and costs of collaborative communication in multi-agent systems.
A collaborative multi-agent system was modeled by a common interest game where groups of players collaboratively perform their best responses simultaneously.
We specifically considered the \kstrong Nash equilibrium as a relevant equilibrium concept to gain insights into system behavior between the fully centralized and fully distributed settings.
We introduced the notion of \lmksmooth~systems and derived bounds on how well the \kstrong Nash equilibrium approximates the optimum in such systems.
Further analysis studied the running time of collaborative multi-agent decision dynamics and their transient performance, as well as the possible performance gains from designing agents' objectives separately from the system objective.
Finally, we underwent a more thorough study in the class of resource allocation games, in which we provided tractable linear programs whose solutions give tight bounds on the \kstrong price of anarchy in resource allocation games.
Future work will study less extensive communication paradigms and dynamical systems that emerge when agents learn together.

\section*{References}
\vspace{-12pt}
\bibliographystyle{IEEEtran}

\appendix
\noindent\emph{Proof of Proposition~\ref{prop:exist}:}
To show existence, we can simply observe that $\Opt{a} \in \argmax_{a \in \mathcal{A}} W(a)$ is a \kstrong Nash equilibrium for any $k \in [n]$.
Because $W(\Opt{a})\geq W(a^\prime)$ for all $a^\prime \in \mathcal {A}$, the global optimal satisfies
$W(\Opt{a}) \geq W(a_{\Gamma}^\prime,\Opt{a}_{-\Gamma}),~\forall a_\Gamma^\prime \in \mathcal{A}_\Gamma,~\Gamma \in \mathcal{C}_{[k]}$. \hfill \qed

\vspace{6pt}
\noindent\textit{Proof of \cref{thm:spoa}}:
The proof can be outlined in four parts: first, the problem of finding $\spoa_k(\gee_n,w)$ is transformed and relaxed; second, the parameterization used in the proof of Proposition~\ref{prop:smoothLP} is used to turn the relaxed problem into a linear program. Next, an example is constructed to show the linear program provides a tight bound. Finally, we take the dual of said linear program.

\begin{figure}
    \centering
    \vspace{1mm}
    \includegraphics[width=0.4825\textwidth]{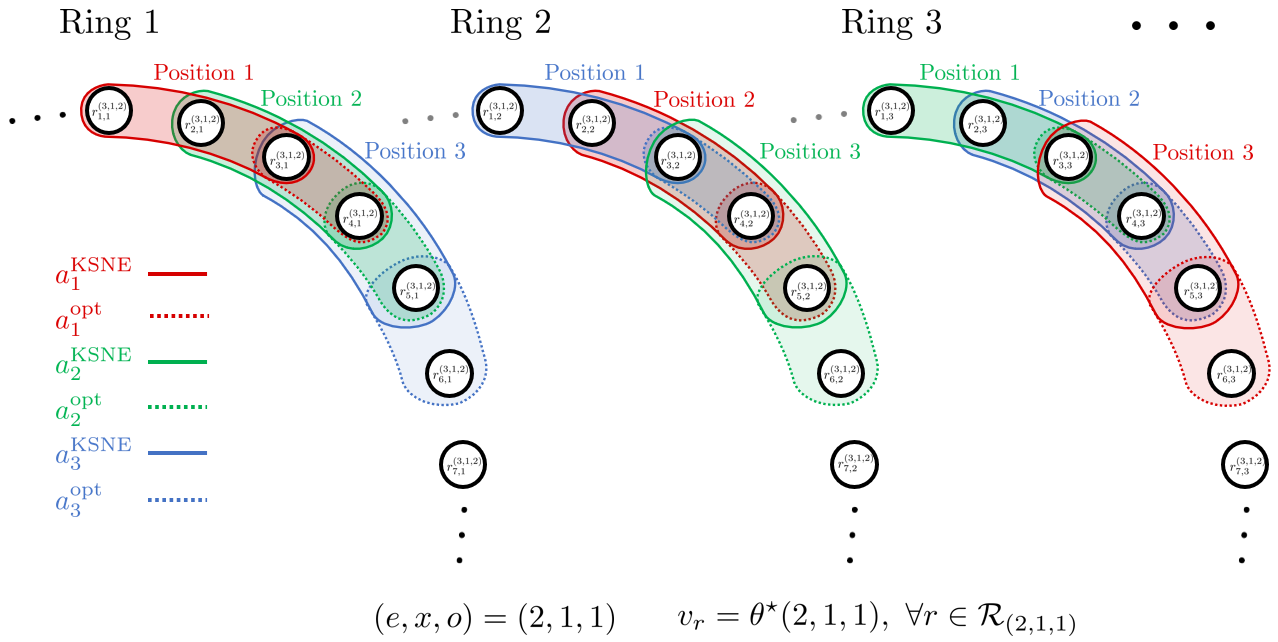}
    \caption{Game construction for worst-case \kstrong price of anarchy. Three of the $n$ players' action sets are shown (color-coded in red, green, and blue, respectively) on three of $n!$ rings for the label $(e,x,o)=(2,1,1)$. A ring has $n$ positions, one for each player. For a label $(e,x,o)$ we generate $n!$ rings for all the orderings of players over positions. This is repeated for each label. Players still only have two actions, but each action covers resources from each ring. The value of a resource is equal to the value of $\theta^\star$, a solution to \eqref{opt:primal}, for the label with which it is associated.}
    \label{fig:rings}
\end{figure}

\begin{figure*}
\begin{align}
\max_{\theta \in \mathbb{R}^{|\mathcal{I}|}_{\geq 0}} \quad & \sum_{e,x,o}w(o+x)\theta(e,x,o)\nonumber\\
~~{\rm s.t.} & \sum_{e,x,o}\left( \binom{n}{\zeta}w(e+x)-  \sum_{\substack{0\leq\alpha\leq e\\ 0\leq\beta\leq o\\ \alpha+\beta \leq \zeta }} \hs {e\choose \alpha} {o \choose \beta}\binom{n\hs - \hs e\hs - \hs o}{\zeta \hs - \hs \alpha \hs - \hs \beta} w (e\hs +\hs x\hs +\hs \beta\hs -\hs \alpha)  \right)\theta(e,x,o) \geq 0\nonumber\\
 & ~~~~~~~~~~~~~~~~~~~~~~~~~~~~~~~~~~~~~~~~~~~~~~~~~~~~~~~~~~~~~~~~~~~~~~~~~~~~~~~~~~~~~~~~~~~~~~~~~\forall\zeta \in \{1,\ldots,k\}\nonumber\\
 & \sum_{e,x,o}w(e+x)\theta(e,x,o)=1\tag{D}\label{opt:primal}
\end{align}
\vspace{-8mm}
\end{figure*}

\noindent\emph{1) Relaxing the problem:} Quantifying $\spoa_k(\gee_n,w)$ can be expressed as taking the minimum \kstrong price of anarchy over all games in $\gee_n$, i.e.,
\begin{mini}|s|
{\scriptstyle G \in \gee_n}{\frac{\min_{\KSNash{a} \in \KSNE(G)} W(\KSNash{a})}{\max_{\Opt{a} \in \mathcal{A}} W(\Opt{a})}}
{\tag{D1}\label{opt:tproof1}}{}
\end{mini}
To make this problem more approachable, we introduce several transformations and relaxations.
First, rather than searching over the entire set of game $\gee_n$, we search over the set of games $\hat{\gee}_n$, in which each agent has exactly two actions.
This reduction of the search space can be done without loss of generality, i.e., $\spoa_k(\gee_n,w) = \spoa_k(\hat{\gee}_n,w)$.
Trivially, $\hat{\gee}_n \subset \gee_n$.
Further, consider any game $G \in \gee_n$; if for every player, each of their actions is removed except their action in the optimal allocation $\Opt{a}_i$ and their action in their worst \kstrong Nash equilibrium $\KSNash{a}_i$, the new problem will maintain the same \kstrong price of anarchy, but will now exist in $\hat{\gee}_n$.
With this reduction, we will denote each player's action set as $\hat{\mathcal{A}}_i = \{\Opt{a}_i,\KSNash{a}_i\}$.
Second, we normalize each resource value $v_r$ such that the equilibrium welfare is one.
This, too, can be done without loss of generality by scaling each resource identically, thus not altering the $\spoa$ ratio.
Third, we invert the objective and consider the maximization of $W(\Opt{a})/W(\KSNash{a})$.
Finally, we sum over each of the $k$-coalition equilibrium constraints.
For each $\zeta \in [k]$, rather than satisfying each inequality in \eqref{eq:ksnash_def}, sum over every combination of the $\zeta$ out of $n$ players, denoted $\mathcal{C}_\zeta$.
Applying these reductions to \eqref{opt:tproof1} gives,
\begin{maxi}|s|
{\scriptstyle G \in \hat{\gee}_n}{W(\Opt{a})}
{\tag{D2}\label{opt:tproof2}}{}
\addConstraint{\hs\hs\binom{n}{\zeta}W(\KSNash{a}) \geq \sum_{\Gamma \in \mathcal{C}_\zeta} W(\Opt{a}_\Gamma,\KSNash{a}_{-\Gamma}),~\forall \zeta \in [k]}
\addConstraint{W(\KSNash{a}) = 1}{}
\end{maxi}
\eqref{opt:tproof2} provides a lower bound on $\spoa_k(\gee_n,w)$ as the feasible set was expanded.
Later, we will show that the bound is tight by constructing an example that realizes it.

\noindent\emph{2) Parameterization:} We use the parameterization introduced in the proof of Proposition~\ref{prop:smoothLP} with respect to the joint actions $a = \KSNash{a}$ and $a^\prime = \Opt{a}$.
By considering any $\theta \in \mathbb{R}_{\geq 0}^{|\mathcal{I}|}$, we can parameterize any game $G \in \hat{\mathcal{G}}_n$; to find the worst-case price of anarchy, we search over all such parameters, i.e., look over the entire class of games.
The linear program \eqref{opt:primal} is the result of the search for the vector $\theta$ that results in the highest price of anarchy.

\noindent\emph{3) Constructing an example:} Consider the following resource allocation problem: for each label $(e,x,o) \in \mathcal{I}$ and permutation of the $n$ player $\sigma \in {\Sigma}_n$, define a ring of $n$ resources.
Total, there are $nn!|\mathcal{I}|$ resources.
Let $r_{i,j}^{(e,x,o)}$ denote the resource with label $(e,x,o)$ at position $i$ in the $j$th ring.
Consider, for instance, the $n!$ rings associated with the label $(e,x,o)=(2,1,1)$ as depicted in \cref{fig:rings}.
We will construct the actions $\KSNash{a}_i$ and $\Opt{a}_i$ so that for each resource in these rings, $e+x=3$ agents have it in only their equilibrium action, and $x+o=2$ agents have it only in their optimal action.
In the first ring (with the monotonic permutation $\sigma = (1,2,3,\ldots,n)$), agent $i$ has actions $\KSNash{a}_i = \{r^{(2,1,1)}_{i,1}, r^{(2,1,1)}_{i+1 \% n,1}, r^{(2,1,1)}_{i+2 \% n,1}\}$ and $\Opt{a}_i = \{r^{(2,1,1)}_{i+2,1}, r^{(2,1,1)}_{i+3 \% n,1}\}$, where $\%$ denotes the modulo operator so the selected resources wrap around the ring.
This pattern continues for each ring $j \in [n!]$ with a different permutation of players $\sigma \in {\Sigma}_n$.
At a ring with label $(e,x,o)$ and permutation $\sigma$, player $i$ has the actions $\KSNash{a}_i = \{r^{(e,x,o)}_{\sigma(i),j},\ldots, r^{(e,x,o)}_{\sigma(i)+e+x-1 \% n,j}\}$ and $\Opt{a}_i = \{r^{(e,x,o)}_{\sigma(i)+e \% n,j},\ldots, r^{(e,x,o)}_{\sigma(i)+e+x+o-1 \% n,j}\}$.
Finally, each resource of type $(e,x,o)$ has a value $\theta(e,x,o)$ where $\theta$ is a fixed parameter.
The function which encodes the welfare from player overlap is $w$.

In the joint action $\KSNash{a}$, each resource is covered by exactly $e+x$ agents, and the system welfare can be written
\begin{equation}\label{eq:W_ksne_welf}
W(\KSNash{a}) = \sum_{e,x,o}n n!\theta(e,x,o)w(e+x).
\end{equation}
Similarly, joint action $\Opt{a}$ satisfies
\begin{equation}\label{eq:W_opt_welf}
	W(\Opt{a}) = \sum_{e,x,o} n n!\theta(e,x,o)w(o+x).
\end{equation}
Now, consider a coalition $\Gamma \in \mathcal{C}_{[k]}$ and denote by $\zeta$ its cardinality.
The system welfare of this group deviating their action to $\Opt{a}_\Gamma$ is
\begin{align}
&W(\Opt{a}_\Gamma,\KSNash{a}_{-\Gamma}) = \sum_{e,x,o}\sum_{j=1}^{n!} \sum_{i=1}^n \theta(e,x,o)w(|\Opt{a}_\Gamma,\KSNash{a}_{-\Gamma}|_r)\nonumber\\
&=\sum_{e,x,o}  \theta(e,x,o)\hs \sum_{\substack{0\leq\alpha\leq e\\ 0\leq\beta\leq o\\ \alpha+\beta \leq \zeta }} \hs\hs nn!  {e\choose \alpha} {o \choose \beta}\binom{n\hs - \hs e\hs - \hs o}{\zeta \hs - \hs \alpha \hs - \hs \beta} w (e\hs +\hs x\hs +\hs \beta\hs -\hs \alpha)\label{eq:W_gamma_dev}
\end{align}
where we let $r$ be the shorthand for $r_{i,j}^{(e,x,o)}$.
The second equality holds by defining $\alpha$ and $\beta$ as the number of players in $\Gamma$ who invested in resource $r$ exclusively in their action $\KSNash{a}$ or $\Opt{a}$ respectively.
By counting arguments, there are exactly $\binom{e}{\alpha}\binom{o}{\beta}\binom{n-e-o}{\zeta-\alpha-\beta}$ positions for the players in $\Gamma$ which yield the profile $(\alpha,\beta)$ for a resource at some fixed position in the ring, there are $\zeta!$ ways to order the players in $\Gamma$, $(n-\zeta)!$ ways to order the players not in $\Gamma$, and $n$ resource in each ring.

Verifying $\KSNash{a}$ is a \kstrong Nash equilibrium boils down to showing \eqref{eq:W_ksne_welf} is greater than or equal to \eqref{eq:W_gamma_dev}.
We can see that this holds whenever $\theta$ is a feasible point in \eqref{opt:primal}.
Accordingly, the \kstrong price of anarchy satisfies
\begin{equation}
\frac{1}{Q^\star} \leq \spoa_k(\gee_n,w) \leq \frac{1}{\sum_{e,x,o}\theta(e,x,o)w(o+x)},
\end{equation}
where the first inequality holds from the reductions made in part 1, and the second holds as the \kstrong price of anarchy is upper bounded by any particular problem; comparing \eqref{eq:W_ksne_welf} and \eqref{eq:W_opt_welf} gives the final expression.
Letting $\theta$ take on the solution to \eqref{opt:primal} shows the bound is tight.

\noindent\emph{4) Taking the Dual:} Before considering the dual program to \eqref{opt:primal}, we first show that the primal is feasible.
It is easy to verify the feasible set is non-empty by considering the point $\theta(1,0,0) = 1/w(1)$ and zero otherwise.
Now, we must show that the feasible set is compact, and thus, the value of \eqref{opt:primal} is bounded.
From the equality constraint, we can obtain $$1 \geq \min_{y>0} w(y) \sum_{\substack{e,x,o\\e+x>0}}\theta(e,x,o).$$
Because we assume $w(y)>0$ for all $y>0$, we show that each value of $\theta(e,x,o)$ such that $e+x>0$ is bounded.
For the remaining values of $\theta(0,0,o)$, consider the equilibrium constraint\footnote{The $\zeta=1$ constraint is present in \eqref{opt:primal} for all $k \geq 1$.} when $\zeta=1$.
By rearranging terms and observing the bounded terms from the previous argument, we observe $L \geq w(1) \sum_{o \in [n]} o \theta(0,0,o)$, where $L$ is a bounded value.
Because $w(1)>0$, the remaining decision variables are also bounded, and thus the feasible set is finite.

Now, we find the dual program to \eqref{opt:primal}.
Because \eqref{opt:primal} is a linear program, we can rewrite it in the more concise form
\begin{maxi}|s|
{\scriptstyle \theta \in \mathbb{R}^{|\mathcal{I}|}}{b^\top \theta\nonumber}
{}{}
\addConstraint{c_\zeta^\top \theta }{\geq 0,~\forall \zeta \in [k]}{\quad\quad(\nu_\zeta)}
\addConstraint{d^\top \theta - 1 }{= 0}{\quad\quad(\rho)}
\addConstraint{\theta}{\geq 0}{\quad\quad(\phi)}
\end{maxi}
where $\nu \geq 0$, $\rho$, and $\phi \geq 0$ are the associated dual variables.
The Lagrangian function is defined as $\mathcal{L}(\theta,\nu,\rho,\phi) = b^\top \theta + (\sum_{\zeta \in [k]} \nu_\zeta c_\zeta^\top\theta) - \rho (d^\top \theta -1) + \phi^\top \theta$.
Let $g(\nu,\rho,\phi) = \sup_{\theta \in \mathbb{R}^{|\mathcal{I}|}} \mathcal{L}(\theta,\nu,\rho,\phi)$ serve as an upper bound to \eqref{opt:primal}.
The dual program is derived by minimizing $g(\nu,\rho,\phi)$; note that this value is only unbounded above unless $b^\top + \sum_{\zeta \in [k]}\nu_\zeta c_\zeta^\top - \rho d^\top + \phi^\top = 0$.
Substituting this into the objective and removing the free variable $\phi$ so that the equality constraint becomes an inequality, the dual problem becomes
\begin{mini}|s|
{\scriptstyle \rho,\{\nu_\zeta \in \mathbb{R}_{\geq 0}\}_{\zeta \in [k]}}{\rho}
{\tag{P1}\label{opt:design1}}{}
\addConstraint{b^\top - \rho d^\top + \sum_{\zeta \in [k]} \nu_\zeta c_\zeta }{\leq 0}
\addConstraint{}
\end{mini}
From strong duality, \eqref{opt:design1} provides the same value as \eqref{opt:primal}.
Expanding terms show that \eqref{opt:design1} is equivalent to \eqref{opt:dual}.
\hfill\qed

\vspace{6pt}
\noindent\textit{Proof of Proposition~\ref{prop:util_upper_bound}}:
The proof is straightforward and simply requires generalizing the constraint set of \eqref{opt:dual}.
Consider taking the same steps as the proof of \cref{thm:spoa} but with the equilibrium constraint defined by the utility rule $u$.
This will result in the same linear program as in \eqref{opt:dual}, but now with the constraint set
\begin{multline}
0 \geq w(o\hs +\hs x)-\rho w(e\hs +\hs x) +\\
\sum_{\zeta \in [k]} \hs \nu_\zeta \hs \left( \hspace{-2pt} \hs \binom{n}{\zeta} u(e\hs +\hs x) \hs -\hs  \hs \sum_{\substack{0\leq\alpha\leq e\\ 0\leq\beta\leq o\\ \alpha+\beta \leq \zeta }} \hs \hspace{-2pt} {e\choose \alpha} \hs {o \choose \beta} \hs \binom{n\hs - \hs e\hs - \hs o}{\zeta \hs - \hs \alpha \hs - \hs \beta} u (e\hs +\hs x\hs +\hs \beta\hs -\hs \alpha)  \hs \hspace{-2pt} \right)\\
\forall (e,x,o) \in \mathcal{I}.
\end{multline}
At this point, the new linear program will provide tight bounds on a specified utility rule $u$.

Finally, we substitute the new decision variable $u_\zeta \in \mathbb{R}^n_{\geq 0}$ into each occurrence of $\nu_\zeta u$.
This enlarges the feasible set, which now subsumes all the feasible points that would evaluate a utility rule $u$ by satisfying $u = u_\zeta$ for all $\zeta \in [k]$.
As we do not enforce this constraint, the value of the final program \eqref{opt:util_opt} provides a lower bound on the original program, or its reciprocal provides an upper bound on the \kstrong price of anarchy under the optimal utility design.
\hfill\qed

\end{document}